\newcommand{\grb}{\ensuremath{G_{RB}}\xspace}
\newcommand{\gc}{\ensuremath{G_{c}}\xspace}
\newcommand{\Cc}{\ensuremath{\mathcal{C}(c)}\xspace}
\newcommand{\grbd}{$G_{RB}$. }
\newcommand{\me}{\ensuremath{M_{e}}\xspace}
\newcommand{\ie}{i.e.,~}
\newcommand{\pp}{\textbf{Decide-pp}}
\begin{document}

\title{Algorithms for the  Constrained  Perfect Phylogeny with Persistent Characters}
\author{Paola Bonizzoni\inst{1} \and Anna Paola Carrieri \inst{1}  \and Gianluca Della Vedova \inst{1}  \and Gabriella Trucco \inst{2}
}

\institute{Dipartimento di Informatica Sistemistica e Comunicazione \\
Univ.  degli Studi di Milano - Bicocca \\
%Viale Sarca 336, 20126 Milano - Italy \\
\email{bonizzoni,annapaola.carrieri,dellavedova@disco.unimib.it}
\and
 Dipartimento di Tecnologie dell'Informazione
Univ. degli Studi di Milano, Crema \\
\email{gabriella.trucco@unimi.it} }

\pagestyle{plain}
\date{march}\maketitle

\begin{abstract}
The   perfect phylogeny is one of the most used models in different areas of computational biology.
In this paper we consider the problem of  the Persistent Perfect Phylogeny
(referred as P-PP) recently introduced to extend the perfect phylogeny  model  allowing {\em persistent} characters,
that is characters  can be  gained and lost at most once.
We define a  natural generalization of the P-PP problem obtained by requiring
that for some pairs (character, species), neither the species nor any of its
ancestors can have the character.
In other words, some characters cannot be persistent for some species.
This new problem is called {\bf Constrained P-PP} (CP-PP).
  Based on a graph formulation of the CP-PP problem, we are able to provide a polynomial time solution   for the CP-PP problem for matrices having an empty conflict-graph.
  In particular we show that all such matrices admit a persistent  perfect
  phylogeny in the unconstrained case.
Using this result,   we develop  a  parameterized algorithm for solving  the CP-PP problem where the parameter is the number of characters.
A preliminary  experimental analysis of the  algorithm  shows that it performs efficiently  and it may analyze  real haplotype  data
not conforming to  the classical perfect phylogeny model.
%  persistent model explain it can detect persistent it is applicable  to incorporate biological complexity due to  back %mutations events, i.e. gain and loss of characters.
 \end{abstract}

\vspace{-.2in}
\section{Introduction}

The perfect phylogeny is one of the most investigated and used models  in several applications where a coalescent model is required for analyzing genomic data.
Conceptually the  model is based on the infinite sites assumption, that is no character can mutate more than once in the whole tree.
While this assumption is quite restrictive, the perfect phylogeny model
turned out to be splendidly coherent within the haplotyping problem~\cite{bo4,Gus02}, where we want to distinguish the two
haplotypes present in each individual when given only genotype data.
More precisely, the interest here is in computing a set of haplotypes and a perfect phylogeny such that the haplotypes
(i) label the vertices of the perfect phylogeny and (ii) explain the input set of genotypes.
This context has been deeply studied in the last decade, giving rise to a number of   algorithms~\cite{Boniz,Gus06}.

In the general framework,  the model is used to reconstruct the evolution of species (taxa) characterized by a set of  binary characters that are gained and/or lost during the evolution. Characters can take only the states $0$ or $1$, usually interpreted as the presence or absence of the character. Restrictions on the type of changes from zero to one and vice versa lead to a variety of specific models \cite{Fel}.

Still, the perfect phylogeny model and the assumptions that have been central in the previous decades cannot be employed
without adaptations or improvements.
In fact, this model is too restrictive to explain the biological complexity of
real data, where homoplasy events (such as recurrent mutations or back
mutations) are present. Thus a central goal in this model is to extend its
applicability, while retaining  the computational efficiency.
 The problem of constructing  phylogenies where the deviations from perfect phylogeny are  small
  has been tackled  for example in \cite{DBLP:journals/siamcomp/Fernandez-BacaL03} under the name of near perfect phylogeny. In particular,  the near perfect phylogeny  haplotyping problem
has been explored in   \cite{DBLP:conf/ismb/SatyaMAPB06}.
Especially the impossibility of losing a character that has been previously acquired turned out to be too restrictive, resulting in more elaborated models, such as the notion of persistent character ~\cite{Pr1} and the General Character Compatibility~\cite{DBLP:conf/isbra/ManuchPG11}.

Following the research direction proposed in ~\cite{Pr1} to investigate the
dynamics of  protein interactions,  the Persistent Perfect Phylogeny problem has
been introduced~\cite{DBLP:journals/tcs/BonizzoniBDT12} to
address the computational  problem of constructing a perfect phylogeny under the
assumption that  only a special type of back mutation may occur in the tree: a
character may change state at most twice in
the tree: once from $0$ to $1$, and (maybe) once from $1$ to $0$.

In the paper we   consider a natural generalization of the P-PP problem, called Constrained P-PP problem (CP-PP), that is obtained by adding a constraint for some characters in the input data,  given by the fact they cannot be persistent in some species. Then we explore algorithmic solutions for the CP-PP problem as well as for the restricted case of the P-PP problem.

%Thus results achieved for the GCC  problem apply also to the CP-PP  and P-PP  problems.

Since our main aim is to find algorithmic solutions,  following the approach in
~\cite{DBLP:journals/tcs/BonizzoniBDT12}, we first explore a graph formulation
of the CP-PP problem, called {\bf Red-black graph reduction} based on the
equivalence of  P-PP to  a problem of completing a matrix where each character
$c$ has two columns $c^+, c^-$, with  $c^+$ ($c^+$) equal to $1$ in a species
$s$ in the matrix  corresponds  to the fact that $s$ has  gained (lost) the character $c$.
Based on the above graph formulation,  we prove that  there exists a   class of binary matrices that  always admit  a positive solution for the P-PP problem, that is they admit a \emph{persistent perfect phylogeny}. For this   special case we also provide a polynomial algorithm that works for the general CP-PP problem.  Based on this polynomial time algorithm  we  show that CP-PP is fixed parameter tractable in the number of characters and propose  a branch and bound based  algorithm,  called {\bf  \pp -opt}.

The algorithms and models discussed in the paper  may have interesting  applications in the  construction of evolutionary trees based on the analysis of binary genetic markers, where variants of the perfect phylogeny have already been considered, such as  in the study of evolution based on  introns~\cite{Pr1} or progression pathways using tumor markers or  in discovering significant associations between phenotypes and single-nucleotide polymorphism markers \cite{DBLP:conf/psb/PanMVTW09} and also in haplotype analysis.
In the paper we have run a  preliminary experimental analysis  aimed to show the applicability of our algorithm for the CP-PP model to  deal with  biological data (binary characters) incorporating recombination events.
The results show that the algorithm performs efficiently  on simulated matrices as well as on real data. The algorithm applied to haplotypes taken from the HapMap project is
able  to detect in binary matrices characters that may be persistent,  whenever they  do not obey the standard perfect phylogeny model.

Finally, we observe that the CP-PP problem  (P-PP problem) is equivalent to cases of the
 General Character Compatibility problem (GCC) investigated in ~\cite{DBLP:conf/isbra/ManuchPG11} whose complexity is still open. Thus our results also apply to  those problems.

\section{The persistent perfect phylogeny and the  red-black graph reduction problem}

In this section we present  the persistent perfect phylogeny problem and its
constrained version.
We show that the  CP-PP problem  reduces to a graph problem,  the  {\bf red-black graph reduction} problem: this result  generalizes the  Theorem \ref{main-equivalence} given in ~\cite{DBLP:journals/tcs/BonizzoniBDT12} to the  constrained persistent perfect phylogeny problem.

In this paper the input is an  $n \times m$ binary matrix $M$ whose columns are associated
with the set $C  = \{c_1, \ldots,
c_m\}$ of characters and whose rows are associated with the set $S = \{s_1, \dots,
s_n\}$ of species.
Then $M[i,j] = 1$ if and
only if the species $s_i$ has character $c_j$, otherwise $M[i,j] = 0$.
The gain of a character $c$ in a phylogenetic tree is represented by an
edge labeled by the character $c^{+}$. In order to model  the presence of
persistent characters, the loss of a character $c$ in the tree is represented
by  an edge that is labeled by the negated character, labeled by $c^{-}$.
Formally, we define the persistent perfect phylogeny model as follows~\cite{zeng,DBLP:journals/tcs/BonizzoniBDT12}.
% \vspace{.1in}
% \noindent

\begin{definition}[Persistent Perfect Phylogeny]
\label{def:persistent-perfect-phylogeny}
Let $M$ be an  $n \times m$ binary matrix.
Then a \emph{persistent perfect phylogeny}, in short  \emph{ p-pp},
for $M$ is a rooted tree $T$ such that:
\begin{enumerate}
\item each node $x$ of $T$ is labeled by a vector $l_x$ of length $m$;
\item
  the root of $T$ is labeled by a vector of all zeroes,  while  for each  node
$x$ of $T$ the value $l_x[j]\in\{0, 1\}$
  represents the state of character $c_j$ in tree $T$;
\item
  each edge $e=(v,w)$ is labeled by at least a character;
\item
  for each character $c_j$ there are at most  two  edges $e=(x,y)$ and
$e'=(u,v)$
  such that $l_x[j] \neq l_y[j]$ and $l_u[j] \neq l_v[j]$
  (representing a change in the state of $c_j$).
In that case $e$, $e'$  occur along the same path
  from the root of $T$ to a leaf of $T$; if $e$ is closer to the root than $e'$,
  then $l_x[j]=l_v[j]=0$, $l_y[j]=l_u[j]=1$, and
  the  edge $e$ is labeled $c_j^{+}$,
  while  $e'$ is labeled $c_j^{-}$;
\item
  each row $r$ of $M$ labels exactly one node $x$ of $T$.
  Moreover the vector $l_{x}$ is equal to the row $r$.
\end{enumerate}

%Then we say that $T$ \emph{realizes} matrix $M$, that is $M$ admits a p-pp tree.
%Whenever  a binary matrix $M$ admits a p-pp tree we say  that $M$ is \emph{solvable}.
\end{definition}

Let us state the main problems investigated in the paper.

{The \bf  Persistent Perfect Phylogeny
 problem (P-PP):} Given a binary matrix $M$, returns a p-pp tree for $M$ if such a tree exists.

A natural generalization of the P-PP problem is obtained by considering as input data a  pair $(M, E^*)$, called {\em constrained matrix} where $M$ is
a matrix over binary alphabet and  $E^*=\{(i_1,j_1), \cdots (i_k, j_k)\}$ is a set of  $0$-entries of $M$. Now a p-pp tree  $T$ for matrix $M$ is consistent with $E^*$
if and only if  $(i,j) \in E^*$ means that character $c_j$ is absent in species $ s_i$ and
  cannot be gained and then lost in the species  $s_i$ in tree $T$, that is $c_j$ is not persistent in species $s_i$.

 {The \bf  Constrained Persistent Perfect Phylogeny
 problem (P-PP):} Given a constrained  matrix $(M, E^*)$,  returns a p-pp tree $T$ for $M$ if such a tree exists,  such that   tree $T$  is consistent with $E^*$.

 Another fundamental observation is that we can restate the CP-PP  problem  as
variant  of the Incomplete Directed Perfect Phylogeny~\cite{Sha}.

\begin{definition}[Extended Matrix]
\label{def:Extended-Matrix}
Let $(M, E^*)$ be an instance of the CP-PP problem. The \emph{extended matrix}
associated with $M$ is an ${n \times 2m}$ matrix \me  over alphabet $\{0,1,?\}$
 which is obtained by replacing each column $c$ of $M$ by a pair of columns
$(c^{+}, c^{-})$.
Moreover for each   row $s$  of $M$
% \begin{enumerate}
% \item
if $M[s,c] = 1$, then  $M_e[s,c^{+}] = 1$ and $M_e[s,c^{-}] = 0$,
% \item
while if $M[s,c] = 0$ and $(s,c) \not \in E^*$, then  $M_e[s,c^{+}] =?$ and $M_e[s,c^{-}] = ?$,
otherwise $M_e[s,c^{+}] =0$ and $M_e[s,c^{-}] = 0.$
% \end{enumerate}
\end{definition}

% \begin{figure}[htbp]
% %\subfigure[A binary  matrix $M$ and the extended matrix of size $n \times 2m$]
% \begin{center}
% \includegraphics[width=0.90\textwidth]{M_Me}
% \end{center}
% \caption{A binary  matrix $M$ and the extended matrix of size $n \times 2m$}
% \label{fig:standard-tree2}
% \end{figure}

In this case the characters $(c^{+}, c^{-})$ are called conjugate.
Informally, the assignment of the \emph{conjugate} pair $(?,?)$ in a species
row $s$ for two conjugate characters ($c^{+}, c^{-}$) means that
character $c$ could be persistent in species $s$, \ie it is first gained and
then lost.
On the contrary, the pair $(1,0)$ means that character $c$ is only gained by
the species $s$. Finally, the pair $(0,0)$ means that character $c$ is never  gained  by
the species $s$.
A \emph{completion}  of  a pair $(?,?)$  associated to a species $s$ and
characters ($c^{+}, c^{-}$) of  \me  consists
of forcing $\me [c^{+},s]=\me [c^{-},s]=0$ or  $\me [c^{+},s]=\me [c^{-},s]=1$.

% (notice that the latter case corresponds to the $2$ state of the GCC  Problem
%for non-branching character trees with 3 states~\cite{DBLP:conf/isbra/ManuchPG11})

A (partial) \emph{completion}  \me is a completion of some of its conjugate
pairs.
% A completion is \emph{full} if all its conjugate pairs are completed.
A fundamental result  states  that $M$ admits a persistent phylogeny  if
and only if there exists a completion  of $M_e$ admitting a directed perfect
phylogeny~\cite{DBLP:journals/tcs/BonizzoniBDT12}.
Now, the same result holds when the input is a constrained matrix  $(M,E^*)$.
In fact,   if $(M,E^*)$ admits a solution, it means that it is possible to build a p-pp tree $T$  where
  each $(i,j)$   in $E^*$ implies that $c_j$ is not persistent  in species $s_i$ of tree  $T$.
As a consequence of this fact, there exists a completion of $M_e$ where
$\me [{c_j}^{+},{s_i}]=\me [{c_j}^{-},{s_i}]=0$. Moreover, by interpreting a conjugate pair of characters as characters of $\me$ is immediate to verify that
tree $T$ is a directed perfect phylogeny for  $\me$. Vice versa, given an extended matrix $\me$ that admits a perfect phylogeny $T$, then it is possible to show that $T$ is a persistent phylogeny for $M$.   This fact is an immediate consequence of definition \ref{def:Extended-Matrix} and the result in \cite{DBLP:journals/tcs/BonizzoniBDT12}.

%In the paper, we give polynomial time solutions for a special case  of the  CP-PP problem and a fixed parameter algorithm, thus our results partially answer the open problems.

\subsection{The conflict graph and red-black graph reduction}

% Let us recall that the directed perfect phylogeny is a restricted
% case of the persistent phylogeny where  no persistent characters are
% allowed~\cite{Gus91}.

% \begin{theorem}
% \label{equivalence}
%   Let $M$  be a binary matrix and let $M_e$ be the extended matrix   associated
% with $M$. Then $M$ admits a p-pp tree if
%   and only if there exists a completion  of $M_e$ admitting a directed perfect
% phylogeny.
% \end{theorem}

Let $M$ be a binary matrix.
Given two characters $c_{1}$ and $c_{2}$, the configurations induced by the
pair $( c_{1}, c_{2} )$ in matrix $M$ is the
set of ordered pairs $( M[s,c_{1}], M[s,c_{2}])$ over all species $S$.
Two characters $c_{1}$ and $c_{2}$ of $M$ are \emph{conflicting} if and only if
the configurations induced by such pair
of columns is the set of all possible pairs $( 0,1)$, $( 1,1) $, $(1,0) $ and
$( 0,0) $. Notice that being in conflict is a symmetric relation, therefore we
can define the \emph{conflict graph}  $G_c =( C,E_{c}\subseteq C \times C)$ of
a matrix $M$,  where a pair
$(c_{i}, c_{j}) \in E_{c}$ if and only if  $c_{1}$ and $c_{2}$ are conflicting
in $M$.
Notice that a conflict graph with no edges (called \emph{empty}) does not
necessarily imply the existence of a rooted perfect phylogeny, because of the
occurrence of the forbidden matrix with only the three configurations $(1,1)$,
$(1,0)$ and $(1,0)$.
However, by allowing a character to be persistent, the matrix admits a rooted
persistent perfect phylogeny.
We also need some graph-theoretic definitions.
A graph is called \emph{edgeless} if it has no edges.
A connected component is called \emph{nontrivial} if it has more than one
vertex.

\begin{figure}[htbp]

\begin{center}
\includegraphics[width=0.50\textwidth]{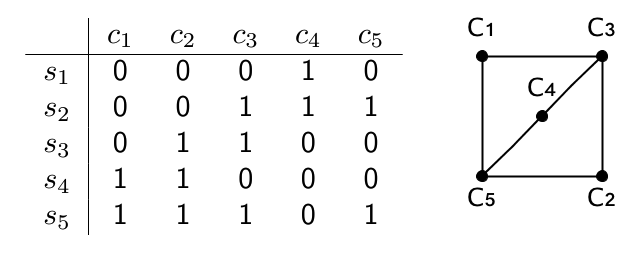}
\end{center}

\caption{A matrix and its conflict graph}
\label{fig:standard-tree}
\end{figure}

The first step is to simplify, if possible, the instance.
In fact we can always remove duplicate rows or columns.
Moreover a character $c$ is called \emph{universal} if all species have $c$,
while $c$ is called \emph{void} if no
species has $c$.
Again, we can always start by removing void characters.
In the rest of the paper, by an extended matrix \me we mean a matrix that may be partially or fully completed.
Besides the conflict graph, we introduce a second graph, called \emph{red-black
graph} and denoted by \grb, which will
be fundamental in our algorithm.

%\notaestesa{PB}{possiamo mantenere le definizioni e modificare dopo come segue}
% \noindent
% {\bf The red-black graph and the completion of matrix $M_e$}

\begin{definition}[Red-black graph]
\label{definition:red-black-graph}
Let \me be an extended matrix.
Then the \emph{red-black graph} $\grb=\langle V,E \rangle$  associated to \me
is the edge-colored graph where (i)
the vertices are the species and the conjugate pairs of \me (that is
for each two conjugate
characters $c^{+}$ and $c^{-}$, only $c$ is a vertex of \grb), (ii) a pair $(s,c)$
is a black edge iff the conjugate pairs
$c^{+}$ and $c^{-}$ are still incomplete in matrix \me and
$M_e[s,c^{+}]=1$ and
$M_e[s,c^{-}]=0$, (iii)
$(s,c)$ is a red edge iff $M_e[s,c^{+}]=M_e[s,c^{-}]=1$.
\end{definition}

\subsection{Realizing characters  in a red-black graph}

In the following we describe some completions of conjugate pairs of an extended
matrix \me that can be expressed as graph operations over the red-black graph
\grb
associated to the extended matrix.
Let $(c^{+}, c^{-})$ be two conjugate characters of \me, and let $\Cc$ be the
connected component of \grb containing the vertex $c$.
Given a partially completed matrix or, equivalently, a red-black graph, a
character is in one of three possible states: inactive (the initial state of all
characters), active, and free.
The  \emph{realization} of a  character $c$  in  \grb  consists of the
following steps:
\begin{enumerate}
  \item
    if $c$ is inactive  and for all species $s \in \Cc$, $M_e[s,c^{+}]=M_e[s,c^{-}]\not=0$ then:
  \begin{enumerate}
  \item
    for each species $s \notin \Cc$, pose  $M_e[s,c^{+}]=M_e[s,c^{-}]=0$;
  \item
    for each species $s \in \Cc$  if $(c,s)$ is not an edge of \grb, add
    a red edge $(c,s)$  and complete $\me$
    by posing  $M_e[s,c^{+}]=M_e[s,c^{-}]=1$;
  \item
    remove from \grb all black edges $(c,s)$ and label $c$ \emph{active}.
  \end{enumerate}
\item
else if $c$ is active and $c$ is connected by red edges to all species in
  ${\cal C}(c)$,  then:
  \begin{enumerate}
  \item
    all such red edges are deleted from \grb and $c$ is labeled \emph{free};
  % \item
  %   complete all remaining  pairs $(M_e[s,c^{+}],M_e[s,c^{-}])$, for $s
  %   \not\in \Cc$    by posing $M_e[s,c^{+}]=M_e[s,c^{-}]=0$.
  \end{enumerate}
\end{enumerate}

In some cases ($c$ is free, or $c$ is active but there exists a species $s\in \Cc$ that is not connected to $c$ by a red edge, or $c$ is inactive but persistent for a species $s\in
\Cc$, i.e. $M_e[s,c^{+}]=M_e[s,c^{-}]=0$) none of the stated conditions
hold, therefore the realization is \emph{impossible}.
Notice that realizing a character corresponds to a partial completion of the
matrix \me that is called the
\emph{canonical completion} of $c$ in \me, which in turn corresponds to the
construction of a standard tree, as follows.

% The following notion given in~\cite{DBLP:journals/tcs/BonizzoniBDT12} is
% crucial in providing the main result of this paper.

\begin{definition}[Standard tree]
\label{definition:standard-tree}
  Let $\me$ be an extended matrix, then a \emph{standard}
p-pp~\cite{DBLP:journals/tcs/BonizzoniBDT12} solving $\me$ has
  the following additional properties:
  \begin{enumerate}
\item
  some edge $e=(v,w)$ may not   labeled by any character.
 In this case
$l_{v}=l_{w}$;
\item
  no internal node has more than one child that is a leaf;
\item
  each leaf is incident on an unlabeled edge;
\item
  each species labels a leaf $x$ of $T$ and the parent of $x$ in $T$;
\item
  each internal node $x$ is labeled with the set $S(x)$ of species and
  with the set $C(x)$ of characters appearing in the subtree $T(x)$ of $T$
rooted at $x$;
\item
  some nodes $x$ are also labeled by a red-black graph $\grb(x)$ defined as
follows.
The graph associated to the root $r$ is the  red-black graph associated to the
extended matrix  \me and where all vertices are inactive.
 Each other node
$x$ might be labeled by the connected components of the red-black  graph having
species and characters of  tree $T(x)$ and
obtained  from graph $\grb(r)$ realizing in sequence all characters  (active or inactive) labeling
the edges on the path from $r$
to $x$ in $T$.
\item
Each internal node $x$ is labeled with the conflict graph $\gc(x)$ computed on
the submatrix of $M_{e}$ induced by the columns and rows in $C(x)$ and $S(x)$.
% Each other node $x$ of $T$ might be labeled by the connected components of the
% red-black  graph having species and characters of  tree $T(x)$ and
% obtained by starting from $\grb(r)$ and  realizing in sequence all characters labeling the edges on the path from $r$
% to $x$ in $T$.
  % \item
  % each node $x$ is also labeled by two sets $P(x)$ and $N(x)$ where $N(x)$ is the set of characters $c$ such that the
  % path of $T$ from $x$ to the root of $T$ has an edge labeled by $c^{-}$ and $P(x)$ is the set of characters $c$ that do
  % not belong to $N(x)$ and such that the
  % path of $T$ from $x$ to the root of $T$ has an edge labeled by $c^{+}$.
%in the same connected component as $c$ in the red-black graph
%that is obtained from the red-black graph associated to the instance of P-PP and realizing in sequence all characters
%labeling the edges of the path from the root of $T$ to  node $x$.
\item
Each subtree rooted at a node $y$ of $T$ whose parent $x$ is labeled by   the
red-black graph $\grb(x)$ has  species
 $S(y)$ and characters $C(y)$ such that $S(y)\cup C(y)$ is a union of connected
components of $\grb(x)$.
% %where the node connecting $x$ with its parent  $u$ is labeled by the
%character $c$,
% the set $S[x]$ of species that are descendants of $x$ in $T$ is equal to the
%set of species of the connected
% components of graph $\grb(x)$ having some species in $S[x]$.
\item
Let $x$ be a node such that the red-black graph $\grb(x)$ has connected
components
$C_{1}, \ldots , C_{k}$ with $k>1$ (\ie \grb is disconnected).
Then $x$ has $k$ children $x_{1}, \ldots, x_{k}$, where $S(x_{i})\cup C(x_{i}) =
C_{i}$ and each edge $(x, x_{i})$ are not labeled by any character.

\end{enumerate}
\end{definition}

It is easy to show that given a generic p-pp tree solving a matrix $\me$, we can modify it to satisfy    the  properties  from 1 to  7 of a standard tree.
In fact, these properties are obtained by adding a  labeling of nodes and leaves
of  the tree (see 4-5-6-7) or are obtained by assuming that there are not any identical rows (see 2)
or by adding additional edges (see 3).
Consequently, when proving that a p-pp tree is standard, we need to prove that properties 8-9 hold  under the assumption that the tree is in  the form
specified by properties 1-7.

\begin{figure}[htbp]
%\subfigure[Realization of character
%$c_4$]{
\includegraphics[width=0.55\textwidth]{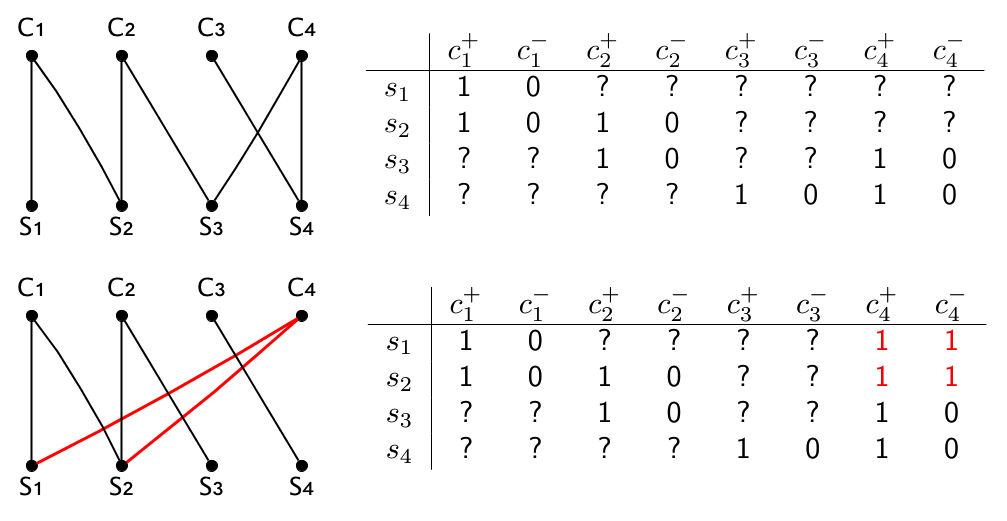}
%}
%\qquad
 %\subfigure[Realization of species
%$s_4$]{
\includegraphics[width=0.55\textwidth]{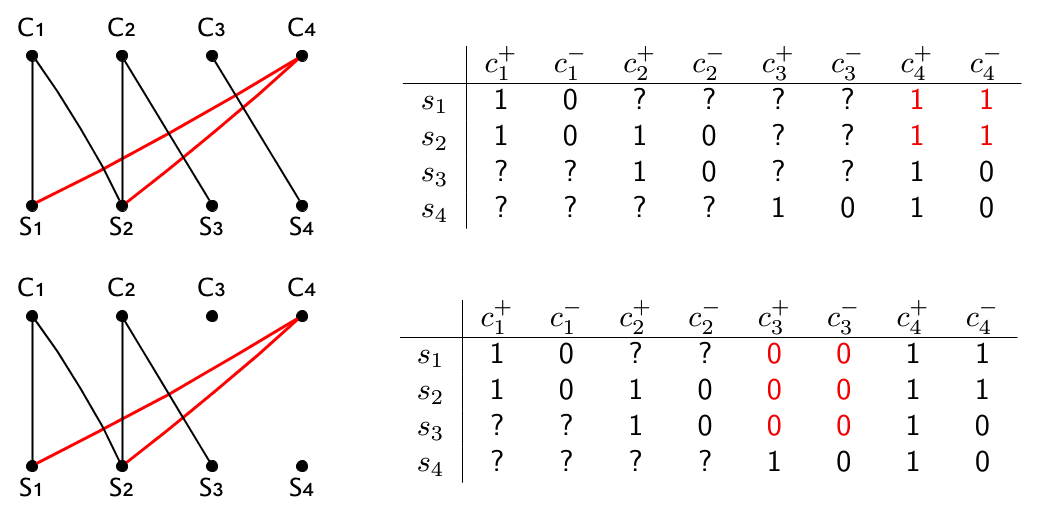}
%}

\caption{Figures (a)   and (b) illustrates the realization of a character  in a red-black graph   associated to an extended  matrix. The canonical
completion of the extended matrix  after the graph operations is shown for the
characters $c_4$ (a) and  $c_3$ (b).}
\label{fig:charct_species_realization}
\end{figure}

The following property of a standard tree  is central in characterizing matrices $\me$  that do not have  a solution.

\begin{property}
\label{property:sigma-standard-form}
  Let $T$ be a standard tree for an extended matrix \me and let $x$ be a node
of $T$.
  Then the red-black graph $\grb(x)$ that labels $x$ does not contain an
induced simple path starting from a species
  vertex and consisting of four red edges (such an induced graph is called red
$\Sigma$-graph).
\end{property}

%In  property~\ref{property:realization-is-canonical-completion}, we call $\me'$ the incomplete matrix induced by  graph  $\grb'$, since it represents all rows and columns of matrix \me that still have %to be completed.

% \noindent
% Then the red-black   for an extended matrix $M_e$ is updated by removal of
% species or characters whenever a species is solved or a character is free in
% the graph.
% Then the  submatrix $M'$ of $M$ induced by a red-black  $G_{RB}$  consists of
% the matrix having all characters and species of   graph $G_{RB}$.

Since a red-black graph \grb implicitly induces a submatrix $M_{1}$ of $M$ whose
rows and columns are the species and characters of \grb, we can define the
the conflict graph induced by $G_{RB}$ as the conflict graph on such submatrix $M_{1}$.
%
% \begin{definition}[conflict graph induced by a red-black graph]
% \label{def:conflict-induced-by-red-black}
% Let \grb be the red-black graph for a partially completed  extended matrix
% $M_e$   for $M$.
% Then  the conflict graph {\em induced by $G_{RB}$} is  the conflict graph of
% the submatrix of $M$ induced by the characters and species of $G_{RB}$.
% \end{definition}
%
% % Observe that the red-black $G(x)$ represents the submatrix that still has to be completed into a matrix realizing subtree $T(x)$.
% Such notion is important also in our paper, since we will describe an algorithm
% that computes a standard p-pp, if it exists.
% To transform a standard p-pp into an actual persistent perfect phylogeny means
% to contract all edges that have no labels
% and to remove all superfluous labels.
%
% For simplicity in the paper we will assume that each edge of the standard tree
% is labeled by at most one character.
%
Observe that  for every   descendant  $x_i$ of node $x$, then by
Definition~\ref{definition:standard-tree} of standard tree, there will be a
connected component of $\grb(x)$ whose species are
exactly those of $T(x_i)$.
Moreover, given an internal node $x$ of a tree $T$, the \emph{label sequence}
of $x$ is the sequence of characters of the edges on
the unique path from the root of $T$ to $x$.
Then we associate to $x$ the matrix $\me(x)$ that is the  partially completion
of \me according to the label sequence of $x$ where only the rows and columns
corresponding to vertices of $\grb(x)$ are retained.

\begin{property}
\label{prop:outgoing-edges-from-char-all-same-color}
  Let $G_{RB}(x)$ be the red-black graph  associated  to a node $x$, let $c$ be
any character of  $G_{RB}(x)$, let $e^{+}$ be the edge
  of $T$ labeled by $c^{+}$ and let $e^{-}$ be the edge   of $T$ labeled by
$c^{-}$ (if it exists).
  By construction of the red-black graph, only three cases are possible:
  \begin{enumerate}
  \item
    $c$ is inactive, all edges of $G_{RB}(x)$ that are incident on $c$ are
black, and $c^{+}$
    does not belong to the label sequence of $x$;
  \item
$c$ is active,    all edges of $G_{RB}(x)$ that are incident on $c$ are red,
moreover $c^{+}$
    belongs to the label sequence of $x$, but $c^{-}$
    does not;
  \item
$c$ is free,    $c$ is an isolated vertex of $G_{RB}(x)$.
  \end{enumerate}
\end{property}

Using the stated properties  we can prove the characterization
stated in    Theorem~\ref{main-equivalence} used to test   the existence of a  solution of the CD-PP problem.
We discover that the Theorem  builds upon an analogous characterization  for  the P-PP problem~\cite{DBLP:journals/tcs/BonizzoniBDT12}, stating that an extended matrix $M_e$ has a perfect phylogeny (\ie a pp tree)    if and only if it has a   standard tree. Moreover a standard tree represents a canonical completion of all characters of the matrix, i.e. the realization of a sequence of all characters,  called
\emph{c-reduction} that leads to an edgeless red-black graph (see Definition \ref{c-reduction} given below).
%An important aspect is that a solution $T$ of the CP-PP problem can be constructed
%by considering a sequence of
%characters, called a \emph{c-reduction}.
%A c-reduction  is \emph{successful} if the realization of its corresponding characters one after the other
%results in an edgeless red-black graph.

\begin{lemma}
\label{lemma:2-tcs-part1}
Let $M_{e}$ be an extended matrix admitting a perfect phylogeny. %persistent
Then there exists a perfect phylogeny $T$ realizing $M_{e}$ such that for each
node  $x$ of $T$ and for each
character $c_{1}$, if  $c_{1}$ is adjacent to two species $s_{1}$, $s_{2}$ in
$G_{RB}(x)$, then  $s_{1}$, $s_{2}$ are either both in $T(x)$ or none of them is
in $T(x)$, where $G_{RB}(x)$ is the red-black graph obtained by realizing the
label sequence of $x$ from the red-black graph associated to \me.
\end{lemma}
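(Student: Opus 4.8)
The plan is to prove the statement by choosing $T$ to be a \emph{standard} tree for $M_e$ and then arguing directly from the directed‑perfect‑phylogeny structure of the conjugate columns. Since $M_e$ admits a perfect phylogeny, by the characterization recalled just before the lemma it also admits a standard tree; I fix such a $T$, so that the red‑black graph $G_{RB}(x)$ obtained by realizing the label sequence of $x$ coincides with the graph attached to $x$ by Definition~\ref{definition:standard-tree}, and in particular every realization along the path from the root to $x$ is well defined. This is what lets me read off the combinatorial state of a character at $x$ from its position in the label sequence.

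First I would fix a node $x$ and a character $c_1$ that is adjacent in $G_{RB}(x)$ to two species $s_1$ and $s_2$. By Property~\ref{prop:outgoing-edges-from-char-all-same-color} the two incident edges have the same colour, and the free case is excluded because a free character is an isolated vertex; hence $c_1$ is either inactive with both edges black, or active with both edges red. In the inactive case the black edges give $M_e[s_1,c_1^{+}]=M_e[s_2,c_1^{+}]=1$, and $c_1^{+}$ does not occur in the label sequence of $x$; in the active case the red edges give $M_e[s_1,c_1^{-}]=M_e[s_2,c_1^{-}]=1$, while $c_1^{-}$ does not occur in the label sequence of $x$.

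The crux is then a single uniform observation. In the directed perfect phylogeny $T$ realizing the completion of $M_e$, each conjugate column is gained along exactly one edge, so the species whose value in that column equals $1$ form a clade, i.e.\ the set of leaves below one edge $e$. I take $e=e^{+}$ (the edge labelled $c_1^{+}$) in the inactive case and $e=e^{-}$ (the edge labelled $c_1^{-}$) in the active case. Because the corresponding label is absent from the label sequence of $x$, the edge $e$ does not lie on the path from the root to $x$; consequently the subtree below $e$ is either entirely contained in $T(x)$ (when $e$ lies strictly inside $T(x)$) or entirely disjoint from $T(x)$ (when $e$ lies in a branch off the root‑to‑$x$ path). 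Since both $s_1$ and $s_2$ belong to that subtree, they are simultaneously inside or outside $T(x)$, which is exactly the claim.

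I expect the main obstacle to be the rigorous identification between the combinatorial state of $c_1$ in the realized graph $G_{RB}(x)$ and the location of the gain/loss edges $e^{+},e^{-}$ in $T$: I must show that ``$c_1$ inactive at $x$'' forces $e^{+}$ off the root‑to‑$x$ path, and symmetrically that ``$c_1$ active at $x$'' forces $e^{-}$ off that path, which is precisely where Property~\ref{prop:outgoing-edges-from-char-all-same-color} together with the standard‑tree semantics of realization enter. Once this correspondence is in place, the clade property of directed perfect phylogenies closes the argument with no further case analysis.
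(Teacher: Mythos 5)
Your argument has a genuine circularity problem in the context of this paper's logical development. You begin by fixing $T$ to be a \emph{standard} tree, justified by ``the characterization recalled just before the lemma.'' But that characterization (a perfect phylogeny exists iff a standard tree exists) is exactly what the paper is in the process of establishing: Lemma~\ref{lemma:2-tcs} derives the existence of a standard tree \emph{from} Lemma~\ref{lemma:2-tcs-part1}, and property~8 of Definition~\ref{definition:standard-tree} (subtrees are unions of connected components of the parent's red-black graph) is precisely the ``immediate consequence'' of the present lemma that makes that derivation work. Indeed, for a standard tree the statement of the lemma is nearly a restatement of property~8, so assuming one begs the question. The paper's own proof deliberately avoids this: it takes an \emph{arbitrary} perfect phylogeny realizing $M_e$, picks the edge violating the conclusion that is closest to the root, and derives a contradiction using Property~\ref{prop:outgoing-edges-from-char-all-same-color} and lca arguments --- a minimal-counterexample scheme rather than an appeal to standardness.

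The gap is not merely presentational, because your active case does not survive the switch to an arbitrary tree. A red edge $(c_1,s_i)$ in $G_{RB}(x)$ is created by the \emph{realization} of $c_1^{+}$, which forcibly marks every species of the connected component $\mathcal{C}(c_1)$ as persistent for $c_1$; an arbitrary perfect phylogeny realizing $M_e$ need not honour that commitment (such a species may simply carry $(0,0)$ in the tree's completion and never lie below $e^{-}$). So the inference ``red edge at $x$ $\Rightarrow$ $s_i$ lies in the clade below $e^{-}$'' is only valid when the realization semantics provably match the tree structure --- which is what the lemma (and then Lemma~\ref{lemma:2-tcs}) is supposed to establish, not something you may presuppose. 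Your inactive case, by contrast, uses only $M_e[s_i,c_1^{+}]=1$ from the original matrix and the clade property, and is fine. To repair the proof you would need either to reproduce something like the paper's closest-violating-edge induction, or to restructure the section so that standard trees are obtained independently of this lemma.
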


Now Lemma \ref{lemma:2-tcs-part1} is used to prove the following lemma, first proved in~\cite{DBLP:journals/tcs/BonizzoniBDT12},
showing  that when an extended matrix admits a solution, than it must have a  standard tree.
Moreover its proof establishes
a correspondence between a  standard tree and the red-black graphs associated
to nodes.
More precisely,  given a standard tree $T$, such a tree  represents the
canonical completion of $M_e$ that we obtain by visiting tree $T$ and realizing
 characters of  visited edges of tree $T$ in the red-black graph  for $M_e$.
Then, when reaching  a node $x$ of $T$ by visiting tree $T$, the red-back graph
$G_{RB}(x)$ represents the species and characters that still have to be
completed in the extended matrix.

\begin{lemma}
\label{lemma:2-tcs}
Let $M_{e}$ be an extended matrix admitting a p-pp tree. Then, there exists a
 completion  of $M_{e}$   that is realized by a
standard tree $T$.
\end{lemma}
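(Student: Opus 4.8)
The plan is to take the persistent perfect phylogeny guaranteed by the hypothesis, replace it by the structurally nicer tree produced by Lemma~\ref{lemma:2-tcs-part1}, and then verify that---after a cosmetic normalization---it satisfies every clause of Definition~\ref{definition:standard-tree}. Since $M_e$ admits a p-pp tree, interpreting each conjugate pair $(c^{+},c^{-})$ as a single character shows (by the equivalence recalled above) that a suitable completion of $M_e$ admits a directed perfect phylogeny; applying Lemma~\ref{lemma:2-tcs-part1} to it yields a perfect phylogeny $T$ enjoying the coherence property that, at every node $x$, no character of $G_{RB}(x)$ is simultaneously adjacent to a species inside $T(x)$ and to a species outside $T(x)$.

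First I would bring $T$ into the form prescribed by properties 1--7 of Definition~\ref{definition:standard-tree}. As already observed after that definition, this is purely cosmetic: suppress duplicate rows (property 2), attach an unlabeled pendant edge at each leaf (property 3), and annotate every node $x$ with the sets $S(x)$, $C(x)$ and with the graphs $G_{RB}(x)$ and $G_c(x)$ read off from the label sequence of $x$ (properties 4--7). None of these operations alters which species lie below a node, so the coherence supplied by Lemma~\ref{lemma:2-tcs-part1} is inherited by the normalized tree.

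The core of the argument is to establish properties 8 and 9. Fix a node $x$ and let $G_{RB}(x)$ have connected components $C_{1},\dots,C_{k}$. I claim the species below $x$ are distributed among the subtrees of the children of $x$ so that each child collects a union of whole components. Suppose not: then some component $C_{i}$ contains species $s_{1},s_{2}$ lying in the subtrees of two distinct children $y_{1}\neq y_{2}$. Walking along a path of $C_{i}$ from $s_{1}$ to $s_{2}$, species and characters alternate, so at the point where the path crosses from $T(y_{1})$ to $T(y_{2})$ there is a character $c_{1}$ incident to one species of $T(y_{1})$ and to one species outside $T(y_{1})$. Applying Lemma~\ref{lemma:2-tcs-part1} at the child $y_{1}$, where these two species would have to be both in or both out of $T(y_{1})$, yields a contradiction, once one checks that this adjacency survives in $G_{RB}(y_{1})$. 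This gives property 8, and when $G_{RB}(x)$ is disconnected the absence of redundant nodes (property 2) forces exactly one unlabeled child per component, which is property 9.

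It remains to certify that the completion of $M_e$ obtained by realizing, along each root-to-$x$ path, the characters labeling the traversed edges is a legal sequence of realization operations producing precisely the graphs $G_{RB}(x)$ used above; I would argue this by induction on depth, invoking Property~\ref{prop:outgoing-edges-from-char-all-same-color} to classify each character as inactive, active, or free and to match a $c^{+}$-edge with step~(1) of realization and a $c^{-}$-edge with step~(2). The main obstacle is the verification in the preceding paragraph: the red-black graph mutates as characters are realized while descending the tree, so one must apply Lemma~\ref{lemma:2-tcs-part1} at exactly the node whose graph still carries the straddling character as an edge, and separately handle the delicate case in which that character is the very one realized on the edge entering $y_{1}$, where its incident edges are recolored or deleted by the realization step.
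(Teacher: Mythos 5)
Your proposal follows essentially the same route as the paper: take the tree guaranteed by Lemma~\ref{lemma:2-tcs-part1}, normalize it cosmetically to satisfy properties 1--7 of Definition~\ref{definition:standard-tree}, and then deduce properties 8--9 by showing that no connected component of the red-black graph at a node can straddle two children's subtrees. Your explicit path-walking argument merely spells out what the paper dismisses as ``an immediate consequence'' of Lemma~\ref{lemma:2-tcs-part1}, so the two proofs coincide in substance.
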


Consider an extended matrix \me that has been completed  and \me admits a standard tree $T$.
Let us consider the sequence of character corresponding to a depth-first visit of
$T$ and the corresponding red-black graph $\grb^{*}$ obtained from \grb by
realizing all characters in the visit.
First of all, notice that $\grb^{*}$ cannot contain any black edges, for
otherwise we would immediately contradict the hypothesis that $T$ solves \me.
Moreover we can prove that $\grb^{*}$ cannot contain any red edge $(c,s)$.
Assume to the contrary that such an edge exists.
Since $T$ solves \me, the path of $T$ from the root to the leaf labeled by $s$
contains the edge labeled by $c^{+}$ and the one labeled by $c^{-}$.
Let $(x,y)$ be the edge labeled by $c^{-}$.
By point 9  of Definition~\ref{definition:standard-tree}, the red-black graph
$\grb(y)$ is connected, therefore all red edges connecting $c$ and a leaf of
$T(y)$ are deleted.
Since $s$ is a leaf of $T(y)$, the red edge $(c,s)$ is deleted.

% definition it must be that no black edge is given in the  red-black graph.
% Moreover, it is easy to show that no red-edge is left and hence the red-black
% graph is edgeless. This fact is a consequence  of the well known
% characterization of matrices admitting a perfect phylogeny as the one  having no
% forbidden matrix,  where the forbidden matrix induced by negated characters is
% represented by the red $\Sigma$-graph.\notaestesa{GDV}{ Non mi convince che sia
% una conseguenza diretta della caratterizzazione classica.
% Secondo me è necessaria anche un'analisi della costruzione del grafo rosso-nero.
% la proprietà vale.
% }
%  In fact, the only way to have red-edges in a reduced graph is having a path of length four in the graph, \ie the red  $\Sigma$-graph.  Notice that if a red-black graph has only red-edges and  a red  $\Sigma$-graph is forbidden then the graph must be  edgeless.

% Property~\ref{property:sigma-standard-form}  easily follows from the
% characterization of solvable matrices.
% %----da contrallare---.

\begin{comment}

\begin{property}
\label{edgeless}
Let $T$ be a  standard  tree that solves an extended matrix $M_e$,  let $x$ be a
leaf of $T$ and let $G_{RB}(x)$ be the red-black graph of $x$.
Then graph  $G_{RB}(x)$ is edgeless.
\end{property}

\end{comment}

\begin{definition}
\label{c-reduction}
Let \grb be a red-black graph for an extended matrix $M_e$ and let $R$ be a
sequence of characters such that  each negative character $c^{-}$ appears after
the corresponding positive character $c^{+}$, and all characters of \grb appear
in $R$.
Then $R$ is a  \emph{successful c-reduction of} \grb
if and only if all single character realization  of $R$ are possible and the graph
reduced by $R$ has no edge.
Then we   say that \me is solvable.
\end{definition}

\begin{lemma}
\label{main-reduction0}
Let  $T$ be a  standard tree that solves a  matrix $M$ and let $V$
be the sequence of characters labeling the edges of $T$ according to a
depth-first visit.
Then $V$ is a successful c-reduction of $\me$.
\end{lemma}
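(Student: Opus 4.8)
The plan is to verify separately the three requirements of Definition~\ref{c-reduction}, reading $V$ as the list of characters produced by a depth-first traversal of the standard tree $T$. First I would settle the ordering and completeness of $V$. By point~4 of Definition~\ref{def:persistent-perfect-phylogeny}, whenever a character $c$ changes state twice the edge labeled $c^{+}$ and the edge labeled $c^{-}$ lie on a common root-to-leaf path with $c^{+}$ strictly closer to the root; hence the $c^{+}$-edge is an ancestor of the $c^{-}$-edge, and any depth-first visit lists $c^{+}$ before $c^{-}$. Completeness---that every character of the red-black graph occurs in $V$---holds because, after deletion of void characters, each character is gained on some edge of $T$, so its positive label appears during the traversal.

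The core of the argument is the second requirement, that every single-character realization prescribed by $V$ is possible. I would establish, by induction along the traversal, the invariant that just before the character labeling an edge $(u,v)$ is processed, the red-black graph obtained by realizing the prefix of $V$ read so far agrees, on the connected component relevant to $(u,v)$, with the graph $G_{RB}(u)$ assigned to $u$ by point~6 of Definition~\ref{definition:standard-tree}. Granting the invariant, Property~\ref{prop:outgoing-edges-from-char-all-same-color} supplies the realization at once: if $(u,v)$ carries $c^{+}$, then $c$ is inactive in $G_{RB}(u)$ and all its incident edges are black, so the inactive$\to$active realization applies; if $(u,v)$ carries $c^{-}$, then $c$ is active, and by point~9 of Definition~\ref{definition:standard-tree} the graph $G_{RB}(v)$ is connected, which forces $c$ to be joined by red edges to every species of its component, so the active$\to$free realization applies.

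The delicate step, which I expect to be the main obstacle, is maintaining this invariant across the depth-first order, since a realization acts globally on the red-black graph whereas the traversal completes an entire subtree before moving to a sibling. To control this I would use two facts: a single realization alters only the connected component of the realized character and turns a freed character into an isolated vertex (Property~\ref{prop:outgoing-edges-from-char-all-same-color}, case~3); and, by point~9 of Definition~\ref{definition:standard-tree}, once the characters on the path to a branching node have been realized the graph splits into connected components that are in bijection with the subtrees rooted at the children. Hence the realizations performed inside one sibling subtree stay confined to that sibling's component and cannot disturb the components delivered to the remaining children, so the invariant survives when the traversal backtracks and descends into the next sibling. Lemma~\ref{lemma:2-tcs-part1} underlies the consistency of this component-to-subtree correspondence.

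Finally, for the third requirement I would reuse the argument already recorded just before Definition~\ref{c-reduction}. Writing $G_{RB}^{*}$ for the graph reduced by the whole sequence $V$, it can contain no black edge (otherwise a gained character would remain uncompleted and $T$ would not solve $M_{e}$) and no red edge $(c,s)$, since such an edge would survive only if the $c^{-}$-edge on the root-to-$s$ path had failed to delete it, contradicting the connectedness of $G_{RB}(y)$ given by point~9. Therefore $G_{RB}^{*}$ is edgeless, and $V$ is a successful c-reduction of $M_{e}$, as claimed.
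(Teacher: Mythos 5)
Your proof is correct and follows essentially the same route as the paper's: an induction over the standard tree that uses point~6 of Definition~\ref{definition:standard-tree} to identify the graph reached by realizing a prefix of $V$ with the graph $G_{RB}(\cdot)$ labeling the corresponding node, and points~8--9 to confine the realizations performed inside one subtree to its own connected component so that sibling subtrees do not interfere. The paper packages this as a structural induction on the number of nodes of $T$ (splitting into the cases where $G_{RB}$ is connected or disconnected) rather than as a traversal invariant, and is terser about why each individual realization is admissible, but the substance is the same.
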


By Lemma~\ref{main-reduction0}   a  standard tree is represented by a canonical completion of all characters, i.e. a c-reduction that leads to an
edgeless red-black graph.
Now  we can  show that having such a edgeless red-black graph   is not only a
necessary but a  sufficient condition for having a canonical completion
admitting a tree.

We have shown that  an extended matrix admits a persistent phylogeny
if and only if it has a directed perfect phylogeny.
Since the  matrices admitting a perfect phylogeny are those that do not contain
a forbidden submatrix~\cite{Gusfield}, we know that  an
extended matrix admits a persistent phylogeny if and only if  there exists a
completion such that no forbidden matrix induced by two negated characters, two
positive characters,  or one of each kind, is induced by the completion.
Let us show that having an edgeless red-black graph  due to a successful
reduction implies that the associated canonical completion $M'$  of $M_e$ has
no forbidden matrix.
First, observe that no forbidden matrix between two positive characters in $M'$
 is possible, since completing or realizing  two  characters ${c_1}^+$ and
${c_2}^+$ that are in the same connected component of the red-black  implies a
containment relation between the two characters (note that two characters in
disjoint components of the red-black graph cannot be connected during the
successful c-reduction).
Finally,  observe that the only way to have red-edges in a reduced graph is
with a path of length four in the graph, \ie the red  $\Sigma$-graph, corresponding to the forbidden matrix induced by negative characters.  Notice
that if a red-black graph has only red edges and  a red  $\Sigma$-graph is
forbidden, then the graph must be  edgeless.

%More precisely, we can  state the following fact that uses
%Lemma~\ref{lemma:2-tcs}.

\begin{theorem}
\label{equivalence-0}
  Let $M$  be a binary matrix and let $M_e$ be the extended matrix   associated
with $M$ such that its associated red-black graph admits a successful c-reduction.
Then $M_e$ is solvable by a standard tree.
\end{theorem}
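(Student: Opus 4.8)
The plan is to reduce the construction of the standard tree to two facts already available: the characterization of matrices admitting a directed perfect phylogeny by the absence of a forbidden (three-gamete) submatrix~\cite{Gusfield}, and Lemma~\ref{lemma:2-tcs}, which upgrades any p-pp tree into a standard one. Concretely, let $R$ be the successful c-reduction of \grb granted by hypothesis, and let $M'$ be the canonical completion of \me obtained by realizing, in the order given by $R$, every character of \grb; this completion is well defined because Definition~\ref{c-reduction} requires every single-character realization in $R$ to be possible, and it is a genuine $\{0,1\}$-matrix over the $2m$ conjugate columns (each conjugate pair ends up as $(1,0)$, $(1,1)$, or $(0,0)$). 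It then suffices to prove that $M'$ contains no forbidden submatrix: once this is established, $M'$ admits a directed perfect phylogeny, hence by the equivalence between persistent phylogenies of $M$ and directed perfect phylogenies of completions of \me~\cite{DBLP:journals/tcs/BonizzoniBDT12} the matrix \me admits a p-pp tree, and finally Lemma~\ref{lemma:2-tcs} produces a completion of \me realized by a standard tree, which is exactly the claimed conclusion.

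To show that $M'$ is forbidden-submatrix-free I would examine the three kinds of column pairs, checking in each that the configuration $\{(1,1),(1,0),(0,1)\}$ cannot all occur. First, two positive columns $c_1^{+},c_2^{+}$: if $c_1$ and $c_2$ lie in distinct connected components they stay in distinct components throughout the reduction (a realization never merges components and poses $(0,0)$ on all species outside \Cc), so off their components they induce only $(0,0)$; if instead they lie in the same component, then one is realized while the other is already active, which forces the sets of species gaining $c_1$ and $c_2$ to be nested, so $(1,1)$ can coexist with at most one of $(1,0),(0,1)$. Second, two negative columns $c_1^{-},c_2^{-}$: a forbidden submatrix here would leave, in the fully reduced graph, an induced four-red-edge path emanating from a species vertex, i.e. a red $\Sigma$-graph, contradicting both Property~\ref{property:sigma-standard-form} and the observation that a red-only graph with no red $\Sigma$-graph is edgeless, whereas $R$ reduces \grb to an edgeless graph. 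The remaining mixed case, $c_1^{+}$ against $c_2^{-}$, I would settle analogously by following the root-to-leaf path on which $c_2$ is first gained and then lost, which nests the species with $c_2^{-}=1$ inside those with $c_1^{+}=1$ and again destroys the all-four-pairs configuration.

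I expect the mixed case, and more generally the passage from the \emph{global} terminal property of $R$ (the final graph is edgeless) to a statement about \emph{every} pair of columns of $M'$, to be the main obstacle: a forbidden configuration is a property of an arbitrary pair of columns, while the reduction only controls the component structure step by step. The decisive lever is precisely that realizing a character never joins two components and zeroes its conjugate pair outside \Cc, so any conflict can be created only inside a single component and is resolved exactly by activation (black $\to$ red) and freeing (red $\to$ deletion). Should the forbidden-submatrix bookkeeping prove too delicate, an alternative constructive route bypasses~\cite{Gusfield} altogether: build $T$ directly by induction on $R$, giving the current node $k$ children whenever \grb splits into $k$ connected components (matching point~9 of Definition~\ref{definition:standard-tree}) and appending one edge labeled $c^{+}$ (resp. $c^{-}$) for each activation (resp. freeing) in $R$; one then verifies the nine properties of a standard tree, the key point being that each character changes state at most twice with $c^{+}$ preceding $c^{-}$, which is guaranteed by the ordering constraint of Definition~\ref{c-reduction}. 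Either route inverts Lemma~\ref{main-reduction0} and closes the equivalence between successful c-reductions and standard trees.
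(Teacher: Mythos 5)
Your proposal takes essentially the same route as the paper's proof: both reduce the theorem to showing that the canonical completion induced by the successful c-reduction contains no forbidden submatrix, handling two negative characters via the red $\Sigma$-graph that would survive in the (edgeless) reduced graph, and the positive and mixed pairs via the fact that realizing a character forces the $(1,1)$ completion on every species of its connected component, contradicting a $0$-entry of the forbidden configuration. One small adjustment: argue the mixed case directly from the realization (as the paper does) rather than from a root-to-leaf path of a tree that does not yet exist at that point in the argument.
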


Lemmas~\ref{equivalence-0} and~\ref{main-reduction0} allow to generalize the   Theorem initially proved  in~\cite{DBLP:journals/tcs/BonizzoniBDT12} for the P-PP problem.
\begin{theorem}
\label{main-equivalence}
Let $M_e$ be an instance of the  CP-PP problem. Then $M_e$ is solvable if and
only if there exists a successful c-reduction for the red-black graph $G_{RB}$
associated to $M_e$.
\end{theorem}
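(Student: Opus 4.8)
The plan is to prove the two implications separately, chaining together the lemmas already established for general extended matrices. The guiding observation is that the constraints $E^*$ are entirely absorbed into $M_e$: a forbidden pair $(s_i,c_j)\in E^*$ is recorded as $M_e[s_i,c_j^+]=M_e[s_i,c_j^-]=0$, and the realization rule declares the realization of an inactive character $c$ \emph{impossible} exactly when some species $s\in\Cc$ carries such a $(0,0)$ pair. Hence the whole red-black machinery, run on $M_e$, can never violate a constraint, and ``$M_e$ is solvable'' means precisely that $M_e$ admits a p-pp tree consistent with $E^*$.

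For the forward implication I would assume $M_e$ solvable, so that there is a p-pp tree $T$ for $M$ consistent with $E^*$; reading the conjugate pairs as characters makes $T$ a p-pp tree for $M_e$. Lemma~\ref{lemma:2-tcs} then yields a completion of $M_e$ realized by a \emph{standard} tree $T'$, and taking the sequence $V$ of characters labelling the edges of $T'$ in a depth-first visit, Lemma~\ref{main-reduction0} gives that $V$ is a successful c-reduction of $G_{RB}$, as required.

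For the backward implication I would start from a successful c-reduction $R$ of $G_{RB}$. Theorem~\ref{equivalence-0} directly produces a standard tree $T$ solving $M_e$. Because a standard tree is in particular a p-pp tree (Definition~\ref{definition:standard-tree} only layers the labelling and structural properties on top of Definition~\ref{def:persistent-perfect-phylogeny}) and every completion arising along a c-reduction respects the $(0,0)$ entries coming from $E^*$, the tree $T$ witnesses solvability of the CP-PP instance.

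The main obstacle, and the real content of generalizing beyond the unconstrained P-PP theorem, is to confirm that Lemmas~\ref{lemma:2-tcs} and~\ref{main-reduction0} and Theorem~\ref{equivalence-0} survive the presence of $(0,0)$ entries in $M_e$. I expect this to go through because none of these statements exploits any special structure of an unconstrained extended matrix: each is phrased purely in terms of the realization operation and the red-black graph of an arbitrary $M_e$, and the realization operation already refuses to realize a character whose connected component contains a constrained species. Thus the arguments carry over, and combining the two implications delivers the stated equivalence.
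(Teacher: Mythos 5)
Your proposal is correct and follows essentially the same route as the paper: the forward direction chains Lemma~\ref{lemma:2-tcs} (existence of a standard tree) with Lemma~\ref{main-reduction0} (a depth-first visit of a standard tree yields a successful c-reduction), and the backward direction is exactly Theorem~\ref{equivalence-0}, with the constraints $E^*$ handled by noting they are encoded as $(0,0)$ conjugate pairs that the realization operation already refuses to violate. Your explicit remark that the supporting lemmas never exploit unconstrained structure is precisely the (tacit) justification the paper relies on when it states that these results ``allow to generalize'' the unconstrained theorem.
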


Therefore we only need to find a successful c-reduction; if none exists, then
the initial instance has no solution. Finally, observe that the property 8 and 9 of a standard tree imply that  we need to find a successful c-reduction for each connected component of the red-black graph, and any concatenation of those  successful c-reductions on a single component gives a successful c-reduction  for the whole graph.

\section{Solving the CP-PP problem in polynomial time  on matrices with empty conflict graphs}

In the following, by using properties of the red-black graph,  we show that a persistent perfect phylogeny always exists for a matrix $M$ that has  an empty  conflict graph.
The main characterization given by Theorem \ref{main-equivalence} is used to solve the CP-PP problem, since we design a procedure that finds a successful $c$-reduction of the red-black graph associated to  the extended matrix for $M$.

Given $M$ a binary matrix,   the {\em partial order graph}  for $M$  is the partial order  $P$   obtained by ordering   columns of $M$ under the  $<$ relation which is defined as follows:
character $c < c' $ if and only if  $M[s, c] \leq M[s, c']$ for each species $s$, otherwise we say that $c$ and $c'$ are not comparable. Moreover, we build a  graph $G=(V,E)$, called {\em adjacency graph} for $M$:   $V$ is the set of columns of $M$ and $(u,v)$ is an edge of $G$ if and only if
$u, v$ are \emph{adjacent}, i.e. there is a species $s$ that is adjacent to both $u$ and $v$
in the red-black graph for the extended matrix $M_e$  associated with $M$. Then the following result holds.

  \begin{lemma}
  \label{clique}
 Let $M$ be a binary matrix with  an empty conflict graph.  Assume that the extended matrix associated with $M$ induces a connected red-black graph and let $P$ be the partial order graph for $M$.
Let $C_M$  be the set of maximal elements in $P$.
Then $C_M$ consists of  elements that are pairwise adjacent in the adjacency graph.

%Then set $C'$   forms  a clique in the adjacency graph  $G$ for $M$.
  \end{lemma}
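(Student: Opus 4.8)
The plan is to reformulate everything in terms of the species sets $S_c=\{s\in S: M[s,c]=1\}$. First I would observe that two characters $u,v$ are adjacent in the adjacency graph exactly when $S_u\cap S_v\neq\emptyset$, since in the initial (all-black) red-black graph a species $s$ is incident to $c$ precisely when $M[s,c]=1$. I would also restate the hypothesis that the conflict graph is empty: for every pair $c,c'$ at least one of the four configurations is missing, \ie $S_c\cap S_{c'}=\emptyset$ (disjoint), or $S_c\subseteq S_{c'}$, or $S_{c'}\subseteq S_c$ (comparable, so $c\le c'$ or $c'\le c$), or $S_c\cup S_{c'}=S$ (cover). Finally, recalling that duplicate columns have been removed, $c<c'$ is equivalent to $S_c\subsetneq S_{c'}$, and two distinct maximal elements are necessarily incomparable. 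Hence for two distinct $u,v\in C_M$ the non-conflict can only come from disjointness or cover, and the whole statement reduces to proving that disjointness cannot occur.

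So I would argue by contradiction: assume $u,v\in C_M$ are distinct with $S_u\cap S_v=\emptyset$, \ie $u,v$ non-adjacent. Since the red-black graph is connected, so is the adjacency graph, and I would take a shortest path $u=c_0,c_1,\dots,c_m=v$ in it; non-adjacency of $u,v$ forces $m\ge 2$, and shortest-path chord-freeness guarantees that $c_0$ and $c_2$ are non-adjacent, \ie $S_{c_2}\cap S_u=\emptyset$ (for $m=2$ this is just $S_v\cap S_u=\emptyset$). Now I would analyze the adjacent pair $(u,c_1)$, which shares a species so $S_u\cap S_{c_1}\neq\emptyset$. Maximality of $u$ excludes $S_u\subseteq S_{c_1}$, so emptiness of the conflict graph leaves only $S_{c_1}\subseteq S_u$ or $S_{c_1}\cup S_u=S$. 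The first alternative is impossible: together with $S_{c_2}\cap S_u=\emptyset$ it gives $S_{c_2}\cap S_{c_1}\subseteq S_{c_2}\cap S_u=\emptyset$, contradicting that $c_1,c_2$ are adjacent. Hence $S_{c_1}\cup S_u=S$. But then every species of $S_v$, being outside $S_u$, must lie in $S_{c_1}$, so $S_v\subseteq S_{c_1}$ with $v\neq c_1$; this means $v<c_1$, contradicting the maximality of $v$. This contradiction shows $S_u\cap S_v\neq\emptyset$, so $u$ and $v$ are adjacent, and $C_M$ is a clique of the adjacency graph.

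I expect the delicate step to be the correct handling of the cover configuration $S_c\cup S_{c'}=S$: this is exactly the configuration that violates ordinary perfect phylogeny but is admissible here, so it cannot be discarded and must be driven to a contradiction using maximality of the endpoints. The two ingredients that make this work are (i) the chord-freeness of a shortest path, which supplies the disjointness $S_{c_2}\cap S_u=\emptyset$ needed to kill the containment alternative, and (ii) using maximality at \emph{both} ends $u$ and $v$. I would also dispatch the degenerate situations beforehand---removing duplicate columns so that $S_c\subsetneq S_{c'}$ faithfully encodes $c<c'$, and noting that a universal character would be the unique maximum and make the statement vacuous---so that the extremal argument above is clean.
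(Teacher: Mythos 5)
Your proof is correct, and it follows the same skeleton as the paper's: connectedness of the red-black graph gives a path in the adjacency graph between two maximal characters $u,v$, and the emptiness of the conflict graph is used to force a comparability that contradicts maximality. The differences are worth noting. The paper argues by induction on the length of a shortest path and, in the base case, claims that if the intermediate character $c$ is incomparable with $a$ then $(a,c)$ is already a conflict edge; this is only true if the configuration $(0,0)$ also occurs, so the paper's case analysis silently skips the ``cover'' alternative $S_a\cup S_c=S$ (which is precisely the configuration that an empty conflict graph permits). You identified this as the delicate case and closed it: from $S_u\cup S_{c_1}=S$ and $S_u\cap S_v=\emptyset$ you deduce $S_v\subsetneq S_{c_1}$, contradicting maximality of $v$. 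You also replace the induction by a direct argument using chord-freeness of a shortest path, which supplies the disjointness $S_{c_2}\cap S_u=\emptyset$ needed to eliminate the containment $S_{c_1}\subseteq S_u$; this is cleaner than the paper's inductive step, which applies the base case to intermediate vertices that need not be maximal. Two small points to make explicit if you write this up: distinct vertices on a shortest path guarantee $v\neq c_1$, and the reduction from set inclusion to the strict relation $<$ relies on duplicate columns having been removed, which the paper's preprocessing provides. Overall your argument is complete and in fact repairs a gap in the paper's own proof.
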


  \begin{lemma}
  \label{consecutive-p}
Let $M$ be a binary matrix that has an empty conflict graph. Let $G_{RB}$ be the red-black graph for the extended matrix associated with $M$.  The realization of two characters $a$ and $b$  that are adjacent in the adjacency graph for $M$ produces at most two red disjoint components, one  with only vertex $a$ and the other with only vertex $b$.
  \end{lemma}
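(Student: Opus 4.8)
The plan is to track the red-black graph through the two realizations and reduce the whole claim to a statement about which of the four configurations the column pair $(a,b)$ induces. First I would record the initial data: the red-black graph associated with the extended matrix of $M$ carries only black edges, and $(s,c)$ is a (black) edge precisely when $M[s,c]=1$; write $S_c=\{s : M[s,c]=1\}$ and let $\overline{S_c}=S\setminus S_c$. By the definition of the adjacency graph, the adjacency of $a$ and $b$ means some species lies in $S_a\cap S_b$, i.e.\ the configuration $(1,1)$ occurs for $(a,b)$. Since the conflict graph is empty, the four configurations $(0,0),(0,1),(1,0),(1,1)$ cannot all occur, so at least one of $(0,0),(0,1),(1,0)$ is missing. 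The goal is to show that the missing one is $(0,0)$, as this is what forces the two red neighbourhoods apart.

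Next I would compute the effect of the two realizations. Realizing the inactive character $a$ deletes the black edges $a$--$S_a$ and adds a red edge from $a$ to every species of $\overline{S_a}$ lying in its component, so afterwards the red edges incident to $a$ reach exactly species outside $S_a$; realizing $b$ adds red edges from $b$ to the species of $\overline{S_b}$ in $b$'s current component. The key structural observation is that, because only $a$ and $b$ have been realized, every red edge of the resulting graph is incident to $a$ or to $b$ (all other characters are still inactive and carry only black edges, which cannot lie on a red path). Hence any red path joining $a$ and $b$ must be a length-two path $a$--$s$--$b$ through a single species $s$, and such an $s$ would satisfy $s\in\overline{S_a}\cap\overline{S_b}$, i.e.\ would witness the configuration $(0,0)$. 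So the two red neighbourhoods merge into one component if and only if $(0,0)$ occurs.

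The crux is therefore to rule out $(0,0)$, and this is exactly where the hypotheses must be used rather than adjacency alone. In the regime where the lemma is applied the two characters are incomparable in $P$ (e.g.\ two maximal elements, which are pairwise adjacent by Lemma~\ref{clique}), so both $(1,0)$ and $(0,1)$ occur; together with $(1,1)$ and the empty conflict graph this forces $(0,0)$ to be absent, whence $\overline{S_a}\cap\overline{S_b}=\emptyset$. Then no species is red-adjacent to both characters, and the red subgraph splits into the component of $a$ (containing $a$ as its only character, together with the species of $\overline{S_a}$) and the component of $b$ (containing $b$ as its only character, together with $\overline{S_b}$), giving at most two disjoint red components as claimed, with the degenerate cases $\overline{S_a}=\emptyset$ or $\overline{S_b}=\emptyset$ accounting for the ``at most''. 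I expect the main obstacle to be precisely this last point: establishing that $(0,0)$ cannot occur. For a genuinely comparable adjacent pair (say $S_a\subsetneq S_b$ with $\overline{S_b}\neq\emptyset$) the species of $\overline{S_b}$ would be red-adjacent to both characters and the two components would merge, so the argument must genuinely invoke incomparability and the empty conflict graph, and care is also needed to confirm that both realizations remain possible (no species of the relevant component has been completed to $(0,0)$ beforehand).
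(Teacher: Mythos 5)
Your argument for the incomparable case is essentially the paper's own proof (its Case~2): adjacency of $a$ and $b$ supplies the configuration $(1,1)$, incomparability supplies $(1,0)$ and $(0,1)$, and the empty conflict graph then forbids $(0,0)$, so no species lies in $\overline{S_a}\cap\overline{S_b}$, no species is red-adjacent to both characters after the two realizations, and the red edges split into the two claimed components. Where you diverge is the comparable case. You flag it as an obstacle and restrict to incomparable pairs ``in the regime where the lemma is applied''; the paper instead treats it head-on as Case~1, concedes that for $a<b$ the realization ``produces a unique red graph that is connected,'' and simply declares that this proves the lemma --- reading the conclusion loosely as ``at most two red components,'' even though a single merged component containing both $a$ and $b$ contradicts the clause ``one with only vertex $a$ and the other with only vertex $b$.'' So your observation that the literal statement fails for comparable adjacent pairs is correct and exposes an imprecision in the lemma rather than a gap in your argument; what actually matters downstream (Theorem~\ref{pol_empty-1}) is that no red $\Sigma$-graph arises, and to cover Case~1 in the paper's spirit you would only need the one-line remark that when $\overline{S_b}\subseteq\overline{S_a}$ every red neighbour of $b$ is also a red neighbour of $a$, so the merged component contains no induced red path $s_1$--$a$--$s_2$--$b$--$s_3$. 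Otherwise the proposal is correct and follows the paper's route.
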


The above  Lemma \ref{consecutive-p} combined with Lemma \ref{clique}  implies that maximal characters can be realized in an arbitrary order and are crucial to prove the following Theorem.

\begin{theorem}
\label{pol_empty}
Let $M$ be a binary matrix that has an empty conflict graph.
Then $M$ admits a persistent perfect phylogeny $T$ and there exists a polynomial
time algorithm to compute $T$.
\end{theorem}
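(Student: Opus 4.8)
The plan is to reduce the statement to the existence of a \emph{successful c-reduction} and then let the general machinery build the tree. By Theorem~\ref{main-equivalence}, $M$ admits a p-pp if and only if the red-black graph \grb of the extended matrix \me admits a successful c-reduction, and by Theorem~\ref{equivalence-0} any such reduction is realized by a standard tree $T$, which is the desired persistent perfect phylogeny. Moreover, by the observation following Theorem~\ref{main-equivalence}, a successful c-reduction may be assembled component by component and concatenated. I would therefore fix a connected component of \grb and exhibit a recursive procedure producing a successful c-reduction on it, and argue correctness by induction on the number of characters.

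For the recursive step I would compute the partial order graph $P$ and its set $C_M$ of maximal characters. By Lemma~\ref{clique}, $C_M$ is a clique in the adjacency graph, so Lemma~\ref{consecutive-p} (applied to the pairs of $C_M$) guarantees that realizing the characters of $C_M$ in an arbitrary order activates each of them and leaves each one in a red component in which it is the unique character, red-adjacent to all species of that component; hence every maximal character can subsequently be made \emph{free}, deleting its red edges. I would first verify that each realization is \emph{possible}: since $E^{*}=\emptyset$, no conjugate column starts with a $(0,0)$ entry, and realizing a character modifies only its own conjugate columns, so the ``inactive but persistent'' obstruction never arises; the freeing step is licensed exactly by the singleton-red-component conclusion of Lemma~\ref{consecutive-p}. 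Deleting the black edges of the maximal characters disconnects the component into sub-components supported on the strictly smaller sets of non-maximal characters, on which I would recurse.

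The engine of the induction is the invariant that every sub-component again has an empty conflict graph. This is the hereditary part: each sub-component is induced by a subset of species and a subset of the original characters, deleting rows can only delete configurations (it can never create the all-four pattern), and the realizations performed so far touched only the conjugate columns of the already-removed maximal characters, leaving the pairwise configurations among the surviving characters unchanged. With fewer characters and an empty conflict graph, the inductive hypothesis yields a successful c-reduction of each sub-component. Concatenating the activations of $C_M$, the recursive reductions, and finally the freeings of $C_M$ gives a sequence in which each $c^{-}$ follows its $c^{+}$ and which drives \grb to an edgeless graph, i.e.\ a successful c-reduction; the standard tree of Theorem~\ref{equivalence-0} is then the phylogeny $T$. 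The polynomial bound is routine: $P$ and $C_M$ are computable in polynomial time, each character is realized at most twice, each realization is a linear-time graph update, and every recursion level removes at least one character.

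I expect the main obstacle to be the correctness of the freeing step for the \emph{whole} clique $C_M$ at once rather than for a single pair. Lemma~\ref{consecutive-p} is stated for two adjacent characters, so I must argue that realizing the entire clique still isolates each maximal character in its own red component, that the $(1,1)$ (persistent) entries created for one maximal character never reconnect it to another through a shared species, and that the red components of $C_M$ stay disjoint from the non-maximal sub-components handled by the recursion, so that freeing remains valid after the recursive calls. The clean separation forced by the clique structure of Lemma~\ref{clique} is what I would lean on here, together with a check that the created persistent entries never induce a red $\Sigma$-graph (Property~\ref{property:sigma-standard-form}) among the maximal characters; this is the delicate point on which the whole reduction hinges.
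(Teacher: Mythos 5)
Your overall strategy is the same as the paper's: reduce the theorem to exhibiting a successful c-reduction via Theorem~\ref{main-equivalence} and Theorem~\ref{equivalence-0}, realize the maximal elements $C_M$ of the partial order $P$ in arbitrary order using Lemma~\ref{clique} and Lemma~\ref{consecutive-p}, and then deal with the remaining (non-maximal) characters while arguing that no red $\Sigma$-graph ever appears. The organization differs slightly (you recurse on connected components with an explicit hereditary ``empty conflict graph'' invariant, while the paper iterates on the updated poset), but that is cosmetic.

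The genuine gap is the freeing step, which you correctly flag as ``the main obstacle'' but then do not close --- and the specific way you propose to handle it does not work. After realizing a maximal character $c_m$, it is red-adjacent only to the species of its component that \emph{lacked} $c_m$; its connected component in \grb in general still contains the non-maximal characters with their black edges and, through them, further species to which $c_m$ is \emph{not} red-adjacent. Hence $c_m$ is not ``red-adjacent to all species of its component'' and cannot be made free immediately after realizing $C_M$, nor are the red components of the $C_M$-characters disjoint from the sub-components on which you want to recurse; a single species can simultaneously carry a red edge to $c_m$ and a black edge to a non-maximal character. Deferring ``the freeings of $C_M$'' to the very end of the concatenated sequence presupposes exactly the disjointness you have not established. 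The paper's proof spends essentially all of its effort on this point: setting $S_1 = S(c_n)\setminus S(c_m)$ and $S_2 = S(c_m)\setminus S(c_n)$ and analyzing three cases, it shows that each second-generation maximal character $c$ (with $c < c_m$) is either \emph{universal} in its component --- so its realization creates no red edges --- or lies in a component whose unique first-generation maximal character becomes \emph{free} as a side effect of realizing $c$, which is what removes the red edges and keeps the components separated. Without an argument of this kind, your induction does not go through: the inductive hypothesis is applied to sub-components that are not actually connected components of the reduced graph, and the claim that every realization remains \emph{possible} (no character ever gets stuck active but not red-adjacent to all species of its component) is precisely what is at stake.
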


If we consider a constrained matrix $(M, E^*)$ where $E^* \not= \emptyset$ and having an empty conflict graph, the
CP-PP problem might not have a  solution for $(M, E^*)$ since some characters cannot be realized.
In fact, if all characters are not persistent, we obtain the classical perfect
phylogeny problem, and even this problem might not have a solution  when the conflict graph is empty, since the matrix $(M, E^*)$ could contain the forbidden matrix.
However, Lemmas~\ref{clique},~\ref{consecutive-p}  and Theorem~\ref{main-equivalence} allow  to prove the correctness of the following procedure that can be used to solve the CP-PP in polynomial time in the case of empty conflict graph. In fact,   the two Lemmas hold also for constrained matrices $(M, E^*)$ (see proof in the Appendix).

%In the following we give a procedure for the reduction of red-black graph of a matrix having an empty conflict graph.

\noindent
{\bf Procedure Solve-CP-PP-empty-conflict$(M, E^*)$}

 {\em Input:} a constrained  binary matrix $(M, E^*)$ that has  an empty conflict graph.

 {\em Output:} a  realization $S_c$ of characters  to successfully reduce graph \grbd

\begin{itemize}

\item Build the partial order  $P$ for   $M$; let $G_{RB}$ be the red-black graph for the extended matrix for $M$.

\item let $C_M$ be the set of all maximal elements in $P$ that are in the same connected component of the red-black graph $G_{RB}$; then realize in an arbitrary order  those characters in $C_M$ that can be realized.
Repeat the step till all characters  have been realized. Otherwise, return {\em no solution}.

\end{itemize}

Now, the correctness of the  algorithm  for the CP-PP problem is again a consequence of the fact that maximal characters can be realized in an arbitrary order and the fact that maximal characters are realized before characters they include by the $<$-relation.
%In fact,  we   use also the fact that in a successful c-reduction of graph \grb  a maximal character $c$ can be realized before a character $c'$ such that  $c' < c$.
This fact relies on the observation that if $c < c'$ and $c'$ is not persistent in a species $s$, then also $c$ is not persistent in the same species.
In fact, if $c'$ occurs in the p-pp tree after $c$ then,  assuming that $c$ is persistent in a species $s$, also $c$ must be persistent in the same species, since $c'$ occurs before the negated character $c^-$.
Otherwise, if $c'$ occurs in the tree before $c$, then all species below $c^+$ and the negated character $c^-$  are also in $c'$, thus forbidding to have $c$ persistent in the same species of $c$.   The correctness of the  algorithm is proved in the Appendix.

 \section{An optimized algorithm}

 In this section we propose an algorithm for the CP-PP  problem called {\bf  \pp-opt}  that is based on the  procedure  {\bf Solve-CP-PP-empty-conflict$(M)$}.
Just as the  algorithm in~\cite{DBLP:journals/tcs/BonizzoniBDT12},  {\bf
  \pp-opt}  reduces an instance $M$ of P-PPH to an instance $\me$ of the  IP-PPH
problem. In fact,we know that $\me$  admits a pph tree $T$ if and only if $T$ is
a solution of  matrix $M$.
Then, by  Theorem~\ref{main-equivalence},  $\me$ admits a pph tree $T$ if and
only if there exists a successful c-reduction of the red-black graph for $\me$.
The algorithm in~\cite{DBLP:journals/tcs/BonizzoniBDT12}    explores all
permutations of  the set $C$ of  characters  of $\me$ in order to find one that
is a successful c-reduction, if it exists.
On the other hand,  {\bf  \pp-opt}  builds  a decision tree, where each edge   represents a character and each path of the tree from the root to a leaf is a   permutation $\Pi(C')$ of  a subset $C'$ of the set  $C$ of characters such that  the realization of $\Pi(C')$ makes the conflict-graph induced  by the red-black graph empty. Thus  {\bf  \pp-opt} strongly relies on the polynomial time solution of the P-PP problem in the case  the conflict-graph is  empty.
The algorithm works in a branch and bound like manner, in the sense that if a branch of the decision tree ending in node $x$ does not lead to a solution, then the decision tree below $x$ is discarded.  More precisely,  each branch ending in node $x$   gives a partial permutation  $\pi$  that consists of all characters labeling the path from root $r$ to  node $x$. A partial completion  $M_{\pi}$ is computed by realizing characters provided by the partial permutation $\pi$. Whenever $M_{\pi}$ contains the forbidden matrix,  then the branch ending in $x$ does  not lead to a solution, and $x$ is labeled as a {\em fail} node.
The algorithm either finds the first permutation of characters that provides a successful c-reduction of the red-black graph for the extended  matrix, if it exists, or will decide that the matrix does not admit a pph tree by visiting the whole decision tree.

Let us  describe the recursive procedure {\bf \pp-opt} which is initial invoked
as  {\bf \pp-opt ($M,  Me, r, \{r\}$)}, where $r$ is the root of the decision
tree  and the visited tree is the set $\{r\}$.

{\bf Algorithm \pp-opt($M$, $M'$, $x$, ${\mathcal T}$)}\\
{\em Input:} a binary constrained matrix $(M, E^*)$ of size $n \times m$, the
set $E^*$ of constraints, a partial depth-first visit tree ${\mathcal T' }$ of the decision tree ${\mathcal T}$  and a leaf  node $x$ of ${\mathcal T}$,  a partial completion $M'$ of the extend matrix \me obtained by the realization of  the  characters labeling a path $\pi$ from $r$ to node $x$ of the tree ${\mathcal T' }$;\\
{\em Output:}  the tree ${\mathcal T' }$ extended with the depth-first visit of ${\mathcal T }$ from node $x$. The procedure eventually outputs a successful c-reduction $r$ or   fails to find such a successful c-reduction.
\begin{itemize}
\item[-] Step 1: if the edge incident to node $x$ is labeled $c$ and  $c$ is
admissible, then realize $c$ in $G_{RB}$ and complete the pair of columns $(c,
c')$ in $M'$.
If the matrix $M'$ has a forbidden matrix, then label $x$ as a fail node. If $x$ is a leaf node, then mark $x$ as a successful node and output the permutation labeling the path from
the root $r$ of tree $T$ and the leaf $x$ of ${\mathcal T' }$.
\item[-] Step 2: compute the conflict graph $G_c$ for the matrix $M$, updated after the realization of the characters along the path $\pi$ from the root $r$ to node $x$ (i.e. $M$ is obtained after eliminating the rows that correspond to species-nodes  that are singletons in $G_{RB}$),
\item[-] Step 3: if the conflict graph $G_c$  is empty, then apply the polynomial-time algorithm for an empty conflict graph  and return a successful c-reduction, if it exists. Else for each node $x_i$ that is a child of node $x$ in tree ${\mathcal T' }$ and is labeled by a non-active character in $G_{RB}$, apply \pp -opt($M$, $M'$, $x_i$, ${\mathcal T' } \cup \{x_i\}$).
\end{itemize}

Notice that testing whether a character may be realized in a connected component of the red-black graph $G_{RB}$ requires time that is linear  in the number of species of the component.
Clearly, the algorithm requires to compute the connected components of $G_{RB}$, which can be done in time $O(f(n, m))$, where $f(n, m)$ is polynomial in the size of graph $G_{RB}$.
Consequently, the   time  required to evaluate a single path of the tree is $O(f(n, m) m)$, since the path may have $m$ characters to be realized and completed.
The total number of explored paths is clearly equal to the number of permutations of set $C$ of characters on the input matrix, in the worst case.

\section{Experimental analysis}

We have implemented the algorithm  {\bf  \pp -opt}  and tested it over simulated data produced by  the tool  \textit{ms} by Hudson \cite{Hu} and  on real data coming from the International HapMap project, a multi-country research to identify and catalog genetic similarities and differences in human beings~\cite{frazer_second_2007}.
The main goal of these preliminary experiments has been to test the average running time   when the number of rows and columns increase, since we wanted to test  the applicability of the method on matrices of given sizes.
The experimental analysis on real haplotype data aims to investigate the use of the persistent model to detect  haplotypes data that cannot be explained by the perfect phylogeny model but can exhibit persistent characters.

We have implemented the algorithm in C++ and the experiments have been run on a standard Windows workstation with 4 GB of main memory.

The results of a first experiment are reported in
Table~\ref{tab3}. The table   reports  the computation time to solve sets of $50$ matrices
for each dimension $50 \times 15$---\ie $50$ species and $15$ characters---
$100  \times 15$,  $200  \times 15$, and  $500  \times 15$ with a recombination
rate $1/15$.
The sets contain only matrices that are solved within five  minutes.
Clearly, the number of unsolved matrices increases with the size of the input matrices but also with the number of conflicts  that are present in the conflict graph.
The table also reports the results obtained by  comparing the execution times of
the exact algorithm given in~\cite{DBLP:journals/tcs/BonizzoniBDT12} ({\bf  \pp})
with the optimized algorithm on sets of matrices with a fixed number of columns
and different numbers of rows.
Since \pp works on unconstrained matrices, we have not introduced any
constraints in the input matrices.
The {\bf  \pp -opt} algorithm is able to find a solution for all matrices in contrast to the {\bf  \pp} algorithm that in some cases takes more than 10 minutes to find a solution for a single matrix.

  The average execution time to solve  $10$ matrices with a single conflict is of  0.031s, 	0.047s, 	0.093s for matrices of size 100x15,  200x15, 500x15 respectively.

%Another experiment has been done with 10 matrices of the same size $50 \times 15$ and different number of edges in the conflict graph. The average time was  $0.015$, $0.031$ and  $0.051$, %respectively for the case of $1, 5$ and $10$ conflicts.

%In order to test the performance of the algorithm for large matrices in terms %of number of species we have processed a matrix of size $1000 \times 15$ with a %conflict graph having $9 $ conflicts (edges).  It took $35.5$ seconds to find %the solution to the matrix. Moreover, we have processed a matrix of size $1000 %\times 30$ with a conflict graph having $149$ conflicts (edges).  It took %$604.66$ seconds to find the solution to the matrix.

In order to test the performance of the algorithm for large matrices in terms of
number of species we have processed a set of $20$ matrices, $10$ of size $1000
\times 30$ and $10$ of size $1000 \times 40$, with conflict graphs having a
number of conflicts (edges)  between $40$ and $236$. We fixed a maximum time
(900 seconds), after which the execution is stopped. In all cases the simulation
terminates without finding a solution, i.e.   all matrices do not admit a p-pp
tree.
Therefore those instances are likely to be the worst cases for our algorithm,
since the entire search tree is explored.
Considering the set of matrices of size $1000 \times 30$, in 40\% of tested
cases the simulation is stopped since the maximum time is elapsed. In the
remaining 60\% of the cases the simulation halts before the timeout.
In particular, in 10\% of cases we obtain a result in a time between 10 and 15
minutes, and in 50\% of cases the result is given in less than 3 minutes.
Considering the set of matrices of size $1000 \times 40$, in 50\% of tested
cases the execution is stopped because the timeout is reached.
In the remaining 50\% the result is given in a time between 3 and 6 minutes, with an average time of 216 seconds.

\begin{table*}
%\caption{Comparing the execution time of the optimized algorithm and the exact algorithm on matrices with fixed number of columns and different %numbers of rows. Each set contains 50 matrices of the same size.
\centering
\begin{scriptsize}
\begin{tabular}{|c|c|c|c|c|c|c|c|c|c|}
\hline
       nxm &   number  & total       & average   & \multicolumn{ 2}{c|}{solved matrices} & \multicolumn{ 2}{c|}{total time in s} & \multicolumn{ 2}{c|}{average time in s} \\
           &   P-PPH         &  conflicts  & conflicts & \pp~\cite{DBLP:journals/tcs/BonizzoniBDT12} & \pp -opt & \pp~\cite{DBLP:journals/tcs/BonizzoniBDT12} & \pp -opt & \pp~\cite{DBLP:journals/tcs/BonizzoniBDT12} & \pp -opt \\
\hline
     50x15 &          6 &        236 &       4.72 &         47 &         50 &      89.12 &      32.32 &       1.90 &       0.65 \\
    100x15 &          4 &        175 &        3.5 &         48 &         50 &     436.02 &     194.63 &       9.08 &       3.89 \\
    200x15 &          3 &        147 &       2.94 &         48 &         50 &    1583.50 &      43.21 &      32.99 &       0.86 \\
    500x15 &          7 &        219 &       4.38 &         44 &         50 &     888.59 &     889.43 &      20.20 &      17.79 \\
\hline
\end{tabular}
\end{scriptsize}
\caption{The table has  entries to specify  the average time to solve a single matrix (in seconds shortened as s),  the  number of matrices that do not admit a p-pp tree, the total number of conflicts, measured as the number of edges in the conflict graph  of the matrices of each set, and the average number of conflicts.
Each considered matrix has a conflict graph    that consists of a single non trivial component.}
\label{tab3}
\end{table*}

\vspace{-.2in}

\begin{comment}

\begin{table*}
\caption{Average execution time in seconds to solve  $10$ matrices with a single conflict. }
\label{tab2}
\centering
%\begin{scriptsize}
\begin{tabular}{|c|c|c|c|}
\hline
 {\bf 50x15} & {\bf 100x15} & {\bf 200x15} & {\bf 500x15} \\
\hline
0.015&	0.031&	0.047&	0.093 \\
\hline
\end{tabular}
%\end{scriptsize}

\end{table*}

To show how the number of columns influences the execution time we simulated a set of matrices with not empty conflict graph and no solution (in this way we are sure that the entire searching-tree is visited). Each matrix has 10 rows. The number of columns changes from 10 to 30. Fig.\ref{tempi} shows how the execution time increases in relation to the increasing number of columns.

\begin{figure}[t!]
\begin{center}
\includegraphics[width=0.8\textwidth]{tempi.pdf}
\end{center}
\caption{Execution time: exponential growth in relation to the number of columns}
\label{tempi}
\end{figure}
\end{comment}

%%%%%%%%%%%%%%%%%%%%%%%%%

Finally,  the algorithm has been tested on real data coming from the International HapMap project. The data are classified by type of population. In our case, we used data from the set ASW (African ancestry in Southwest USA). Each individual is described by the two haplotypes (in our application the two haplotypes correspond to two different species, i.e. two different rows of the matrix).
%The matrices in HapMap are described using the nucleotides alphabet {A, C, G, T}. To adapt this format to our algorithm, a pre-processing is needed to translate each matrix into a binary matrix. Since each %character (column of our matrix) is described by a maximum of two different nucleotides, we considered the dominant character (i.e. the most frequent character in the considered column) as 0, and the %recessive character as 1.
The   data set consists of binary matrices of dimensions $10 \times 10$, $26
\times 15$, $26 \times 25$, and $26 \times 30$. For each group we considered 10
matrices. In all cases the matrices do not admit perfect phylogeny, and the
number of conflicts changes from a minimum of 4 to a maximum of 138.
Increasing the size of the matrix, and therefore the number of conflicts, the percentage of matrices that admit persistent perfect phylogeny  decreases. More in detail, 80\% of the tested matrices of size $10 \times 10$ admits solution, only 20\% of the tested matrices of size $26 \times 15$ admits solution, and none of the sets $26 \times 25$, and $26 \times 30$ admits solution.
The results confirm the conjecture that haplotype data may be related by the persistent phylogeny in case they cannot be explained by the perfect model. Moreover, the results on simulated and real data show the good performance in terms of time for matrices of a certain size even in the case that no solution is given, i.e. the searching space increases. This  behavior  is due to the fact that the algorithm includes a test to bound a path whenever the path does not lead to a solution, since we test whether a partial  completion includes the forbidden matrix.
%%%%%%%%%%%%%%%%%%%%%%%%%%

A final experiment an interesting case where we have matrices of size 30 x 60
with 4 or 5 conflicts (hence those matrices do not admit a perfect phylogeny).
On those matrices \textbf{\pp} have always reached the 15-minute timeout without
computing a solution.
On the other hand, imposing that a few (less than $10$) characters cannot be
persistent allows {\bf \pp -opt} to find a solution in a few seconds.
This experiment shows that introducing some constraints can help in finding a
solution, hence a feasible strategy to determine if a matrix has a persistent
phylogeny is to introduce some random constraints.
We are currently planning an extension of {\bf \pp -opt} that introduce some
deterministic constraints, based on an analysis of the initial conflict graph,
to speed up the computation.

 \section{Conclusions and open problems}

 %\vspace{-.2in}
 In this paper we have investigated  the CP-PP problem, which is the general problem of computing a persistent  perfect phylogeny for   binary matrices where some characters  may be forced not to be persistent.
 The  case where all characters may be persistent is called P-PP problem. The computational complexity of this
 problem is still open, except when the output is a specific tree, that is,  a branch or a path.
In order to find algorithmic solutions of the  CP-PP problem we  proved  that,
similarly to the P-PP problem~\cite{DBLP:journals/tcs/BonizzoniBDT12},  CP-PP
can be reduced to a problem, called {\em Red-black graph reduction problem}, of
emptying a graph with a sequence of operations on characters.
This formulation of the problem allows us to find  a  polynomial time solution
for matrices whose conflict graph is empty.
Based on this result we propose a fixed parameter algorithm (the parameter is the number of characters) that uses a branch-and-bound technique to reduce the computation time.

An experimental analysis over simulated matrices and on binary matrices
representing real haplotypes shows that the algorithm is able to process the
majority of matrices with 40 characters,  1000 species, and no solution in a few
minutes (clearly, for all those cases the algorithm determines that no solution
exists---those instances are usually the hardest for our approach), while being
very fast on matrices with  1000 species and  15--20 characters that admit a solution.

  Finally, we observe that the CP-PP problem  (P-PP problem) is equivalent to cases of the
 General Character Compatibility problem (GCC) investigated in ~\cite{DBLP:conf/isbra/ManuchPG11}, when  states are over set
 $\{0,1,2\}$ and when, for each generalized character $\hat{\alpha}$, ${\alpha}(s)
 \in \{\{1\}, \{0\}, \{0,2\}\}$  (respectively,  ${\alpha}(s) \in \{ \{1\}, \{0,2\}\}$ for the P-PP case ) for each species $s$ and $T_{\alpha}
 =  0 \rightarrow 1 \rightarrow 2$.
It is interesting to note that the computational complexity of the  cases of the GCC problem that are equivalent to  P-PP and CP-PP,
%---where for each generalized character $\hat{\alpha}$ it holds that $ {\alpha}(s) \in \{\{1\}, \{0\}, \{0,2\}\}$  for each species $s$ and $T_{\alpha}
% =  0 \rightarrow 1 \rightarrow 2$
i.e.  cases 5 and 6 of Table~1 in ~\cite{DBLP:conf/isbra/ManuchPG11},  is still
open,  while only partial results are obtained   when the solution is constrained to be a branch or a path (see the Appendix).
Thus the results we give in the paper also apply to those cases.

\subsection{Acknowledgements}
The authors would like to thank Rob Gysel for pointing out the connection of the persistent perfect phylogeny with the GCC problem.
Anna Paola Carrieri also thanks Dan Gusfield for useful discussions on the problem.

 %It is interesting to note that the computational complexity of the above  two cases of the GCC problem that are special cases of   the P-PP and CP-PP problem are still open
%(they correspond to cases 5 and 6 of the Table  1 in ~\cite{DBLP:conf/isbra/ManuchPG11}), while only partial results are obtained in the case the solution is constrained to be a branch or a path.

% \bibliographystyle{abbrv}
% \bibliography{biblio-ppp_algorithm,abbreviations,books,graphs,biology,complexity,haplotype}

\begin{thebibliography}{10}

\bibitem{Boniz}
P.~Bonizzoni.
\newblock A linear time algorithm for the {P}erfect {P}hylogeny {H}aplotype
  problem.
\newblock {\em Algorithmica}, 48(3):267--285, 2007.

\bibitem{DBLP:journals/tcs/BonizzoniBDT12}
P.~Bonizzoni, C.~Braghin, R.~Dondi, and G.~Trucco.
\newblock The binary perfect phylogeny with persistent characters.
\newblock {\em Theor. Comput. Sci.}, 454:51--63, 2012.

\bibitem{bo4}
P.~Bonizzoni, G.~Della Vedova, R.~Dondi, and J.~Li.
\newblock The haplotyping problem: a view of computational models and
  solutions.
\newblock {\em International Journal of Computer and Science Technology},
  18:675--688, 2003.

\bibitem{frazer_second_2007}
The International Human Consortium.
\newblock A second generation human haplotype map of over 3.1 million {SNPs}.
\newblock {\em Nature}, 449(7164):851--861, Oct. 2007.

\bibitem{Gus06}
Z.~Ding, V.~Filkov, and D.~Gusfield.
\newblock A linear time algorithm for {P}erfect {P}hylogeny {H}aplotyping (pph)
  problem.
\newblock {\em Journal of Computational Biology}, 13(2):522--553, 2006.

\bibitem{Fel}
J.~Felsenstein.
\newblock {\em Inferring Phylogenies}.
\newblock Sinauer Associates, 2004.

\bibitem{DBLP:journals/siamcomp/Fernandez-BacaL03}
D.~Fern{\'a}ndez-Baca and J.~Lagergren.
\newblock A polynomial-time algorithm for near-perfect phylogeny.
\newblock {\em SIAM J. Comput.}, 32(5):1115--1127, 2003.

\bibitem{Gusfield}
D.~Gusfield.
\newblock {\em Algorithms on Strings, Trees and Sequences: Computer Science and
  Computational Biology}.
\newblock Cambridge University Press, Cambridge, 1997.

\bibitem{Gus02}
D.~Gusfield.
\newblock Haplotyping as perfect phylogeny: Conceptual framework and efficient
  solutions.
\newblock In {\em Proc. 6th Annual Conference on Research in Computational
  Molecular Biology (RECOMB 2002)}, pages 166--175, 2002.

\bibitem{Hu}
R.~R. Hudson.
\newblock Generating samples under a wright-fisher neutral model of 31 genetic
  variation.
\newblock {\em Bioinformatics}, 18(2):337--338, 2002.

\bibitem{DBLP:conf/isbra/ManuchPG11}
J.~Manuch, M.~Patterson, and A.~Gupta.
\newblock Towards a characterisation of the generalised cladistic character
  compatibility problem for non-branching character trees.
\newblock In {\em ISBRA}, pages 440--451, 2011.

\bibitem{DBLP:conf/psb/PanMVTW09}
F.~Pan, L.~McMillan, F.~P.-M. de~Villena, D.~Threadgill, and W.~Wang.
\newblock Treeqa: Quantitative genome wide association mapping using local
  perfect phylogeny trees.
\newblock In {\em Pacific Symposium on Biocomputing}, pages 415--426, 2009.

\bibitem{Sha}
I.~Peer, T.~Pupko, R.~Shamir, and R.~Sharan.
\newblock Incomplete directed perfect phylogeny.
\newblock {\em Siam Journal on Computing}, 33(3):590--607, 2004.

\bibitem{Pr1}
T.~Przytycka, G.~Davis, N.~Song, and D.~Durand.
\newblock Graph theoretical insights into dollo parsimony and evolution of
  multidomain proteins.
\newblock {\em Journal of Computational Biology}, 13(2):351–--363, 2006.

\bibitem{DBLP:conf/ismb/SatyaMAPB06}
R.~V. Satya, A.~Mukherjee, G.~Alexe, L.~Parida, and G.~Bhanot.
\newblock Constructing near-perfect phylogenies with multiple homoplasy events.
\newblock In {\em ISMB (Supplement of Bioinformatics)}, pages 514--522, 2006.

\bibitem{zeng}
J.~Zheng, I.~B. Rogozin, E.~V. Koonin, and T.~M. Przytycka.
\newblock Support for the {C}oelomata {C}lade of {A}nimals from a {R}igorous
  {A}nalysis of the {P}attern of {I}ntron {C}onservation.
\newblock {\em Mol. Biol. Evol.}, 24(11):2583--2592, 2007.

\end{thebibliography}

\newpage
\appendix

\section*{Appendix}

%Clearly these results  hold for the CP-PP and P-PP problem.
%

{\bf Proof of Lemma \ref{lemma:2-tcs-part1}}

\begin{proof}
   Assume to the contrary that no such tree exists.
  More precisely, we will consider, among all realizations of $M_{e}$, a tree
$T$ with an  edge $e=(u,x)$ that is
  closest to the root among   all edges contradicting the lemma, that is
$s_{1}\in T(x)$,
  $s_{2}\notin T(x)$.
  Clearly $e$ is labeled by $c^{+}$ or $c^{-}$. Let $y$ be the least  common
ancestor of $s_{1}$ and $s_{2}$. Since  $s_{2}\notin T(x)$,
  $y$ must be an ancestor of  $x$.

  By  cases (1)  and (2) of
Prop.~\ref{prop:outgoing-edges-from-char-all-same-color}, since $c_1$ has
incident edges from $s_1$ and $s_2$ in $G_{RB}(x)$,  both edges
$(c_{1},s_{1})$, $(c_{1},s_{2})$ are black or red.
  Assume  they are both black in graph  $G_{RB}(x)$: thus they must be black
also in graph  $G_{RB}(y)$.
   For this reason, note that $c_1$ is adjacent to $s_1$ and $s_2$ also in
graph  $G_{RB}(y)$ as they are adjacent after the the realization of
  the characters in the label sequence of $x$.

Since both $s_{1}$ and $s_{2}$ have the character $c_{1}$, the edge $e_{1}$ of
$T$ labeled by $c_{1}$ must lie on the common portion of the paths from the root
to both $s_{1}$ and $s_{2}$ that is $e_{1}$
lies on the path from the root to node $y$ (as $y$ is the the least  common
ancestor of $s_{1}$ and $s_{2}$).
Therefore  both edges $(c_{1},s_{1})$, $(c_{1},s_{2})$ are red in graph
$G_{RB}(y)$ (by definition of graph associated to a node), contradicting the
previous assumption. Thus assume that $(c_{1},s_{1})$, $(c_{1},s_{2})$ are red
in $G_{RB}(x)$.

By case~2 of Prop.~\ref{prop:outgoing-edges-from-char-all-same-color}, $x$
precedes the edge $e_1= (v,z)$ that is labeled by $c_1^{-}$, as $c_1^{+}$
belongs to the label sequence of $x$, but  $c_1^{-}$ does not.
Two cases are possible. If $c_1^{+}$  belongs to the label sequence for $y$,
since
both $s_{1}$ and $s_2$ do not have character $c_1$ but are descendant of $y$,
they must also be two descendants of $z$ and a fortiori of $x$, a contradiction
of the initial assumption. Thus, assume that  $c_1^{+}$  does not belong to the
label sequence of $y$.

Thus there exists an edge $(u',v')$ labeled $c_1^{+}$ that occurs along the path
from node $y$ to $u$.
In the graph $G_{RB}(v')$, $c_1$ is connected to $s_1$ and $s_2$ with red edges,
for otherwise the same property would not hold for $G_{RB}(v')$.
Since we have asked for a vertex $x$ that is the closest to the root for
which this Lemma does not hold, the Lemma does hold for all vertices on the path
from the root to $u$.
Consequently, $s_{1}$ and $s_{2}$ are both descendants of $u$, contradicting the
hypothesis that $y$ is the lca of $s_{1}$ and $s_{2}$.
\qed\end{proof}

{\bf Proof of Lemma \ref{lemma:2-tcs} }
\begin{proof}
Consider the tree $T$ realizing $M_e$ that satisfies
Lemma~\ref{lemma:2-tcs-part1}.
It is immediate to modify $T$ in such a way that it has properties 1-7 of definition of a standard tree.
Let $e=(u,x)$ be an edge labeled by $c^{+}$ or $c^{-}$, and let $S(x)$, $C(x)$
be respectively the species and the characters of $T(x)$.
Notice that an immediate consequence of Lemma~\ref{lemma:2-tcs-part1} is that
the set of species that are in the same
connected component $\mathcal{C}$ of graph $G_{RB}(u)$ is either contained in
$T(x)$ or  disjoint from $T(x)$.
Therefore $T(x)$ is the union of some connected components of $\grb(u)$.
It is easy to show that properties 8-9 of a standard tree holds for $T$.
\qed\end{proof}

{\bf Proof of Lemma \ref{main-reduction0}}
\begin{proof}
By definition of standard tree and of depth-first visit, each negative character
$c^{-}$ is preceded in $V$ by $c^{+}$. Therefore we only have to prove that all
all single character reductions of $V$ are possible and the graph
reduced by $V$ has no edge.

Let us prove the lemma by induction on the number $n$ of nodes of $T$.
Clearly the lemma holds when $n=1$, that is the tree consists of a single leaf,
therefore assume that the lemma holds for all tree with at most $n$ nodes.

If \grb is disconnected, then it is easy to
 show  that visiting separately and successively each subtree of  a standard tree $T$ is equivalent to
visiting all of $T$, from which the lemma easily follows.

Assume now that \grb is connected, therefore the root $r$ of $T$ has only a
child $x$.
Let $e=(r,x)$ be the only edge of $T$ incident on $r$ and let $c$ be the
character labeling $e$.
Since $x$ is the only child of $r$, $V$ consists of the edge $e$ followed by the
depth-first visit of $T(x)$.

By definition of standard tree (point 6), $\grb(x)$ is obtained from \grb by
realizing character $c$.
Consequently $T(x)$ solves the graph $\grb(x)$ and, by inductive hypothesis, the
lemma holds for $T(x)$.
Moreover $T(x)$ contains no black edges incident on $c$, as they all have been
removed when realizing $c$.
All other edges of $\grb$ are removed by realizing the characters in the
depth-first visit of $T(x)$, therefore no edge remains from \grb after
realizing $V$.
\qed\end{proof}

{\bf Proof of Theorem  of \ref{equivalence-0}}
\begin{proof}

Let   $M$ be the completion of matrix \me\ obtained from a successful c-reduction
of the red-black graph for $M_e$.
In the following we show that $M$ has no forbidden matrix. This fact will prove
that $M$ admits a pp tree.
Let $G_R$ be the red-black obtained after the realization of the characters of
the successful  c-reduction.
 Assume to the contrary that $M$ has two characters $c, c_1$ that induce a
forbidden matrix $F$, and let $s_1, s_2, s_3$ be the species of $M$  having the
configuration $(1,1)$, $(1,0)$ and $(0,1)$ in $F$, respectively.

We must consider the following cases.

Case 1. Assume that the forbidden matrix is induced by   two negated characters.
This fact implies that $G_R$ will have an induced $\Sigma$-graph, thus
contradicting the fact that $G_R$ is edgeless.

Case 2. Assume that the forbidden matrix is induced by two positive characters.

 Then  $c, c_1$ must be in the same connected component of the  red-black graph
before their realization, as species  $s_1$ is connected to both characters (we
do not know if $s_1$ is connected  by a black or red edge).
 Now, the realization of $c$ produces the red edge $(c,s_3)$,  since
$M_e[s_3,c] = ?$. Then $M[s_3,c]=1$ in the completion  $M$, a contradiction
with the assumption.

 Case 3. Assume that the forbidden matrix is induced by a positive and negated
character.

Assume that $c$ is the negated character.  Since $c$ and $c_1$ share a species
in the forbidden matrix $F$, it means that $c$ and $c_1$ are in the same
connected component of the red-black graph  when $c_1$  and $c$ are realized in
the graph. Since $(0,1)$ is given in the matrix $F$ in row $s_2$, by definition
of realization of $c_1$, it must be that  $M[s_2, c_1]=1$ and
$M[s_2,\bar{c_1}]= 1$ as $M_e[s_2,c_1]=0$. But this  is a contradiction.

\qed\end{proof}

 {\bf Proof of Lemma \ref{clique}}
  \begin{proof}
  %------da rivedere -----

 Let $a,b$ be an arbitrary  pair of elements that are in set $C_M$.
 %We show that $(a,b)$ is an edge of the adjacency graph.
  Since the red-black is connected, there exists a path $\pi$  connecting the two vertices $a$ and $b$. Then by induction on the number $k$ of  internal vertices of  the smallest  path that connects $a$ to $b$ we show that $(a,b)$ is an edge of the adjacency graph $G$.
Assume first that $k=1$.   Then assume that
  $(a,c)$, $(c,b)$ are edges of the  shortest path  $\pi$ connecting the two vertices $a$ and $b$.

Assume on  the contrary that there is no edge $(a,b)$ in the adjacency graph.
%We will obtain a contradiction, since  we can show the existence of  a species $s$ that is not connected to both $a, c$ in %the red-black graph.
The following cases must be considered. Assume that $c$ is not comparable with both $a$ and $b$. Since $c$ adjacent with $a$ it follows that there exists a species $s$ such that $s$ contains $a$ but not $c$ and similarly there exists a species $s'$ such that $s'$ contains $c$, but not $a$. But this fact would imply that $(a,c)$ is an edge of the conflict graph which is a contradiction. For the same reason $c$ must be comparable with $b$.
Assume that $c$ is comparable with $b$ and $a$ in $P$.
Since $a, b \in C_M$, it  holds that $c < b$ and $c < a$. But this fact would imply that $a, b$ are adjacent in graph $G$ as they share a common species.
Then assume that $k =n >1$. Clearly,      $G$  must have edge $(a, c)$ where $c$ is the vertex adjacent to $a$ on the path  having $k$ internal vertices. Since the   path connecting $c$ and $b$  has   $k-1$ internal vertices, by induction it holds that  also   edge $(c,b)$ is in graph $G$.  Consequently, there exists a   path with $1$ vertex  in the red-black graph connecting  $a$ and $b$. By induction, we show that $(a,b)$ must be an edge of the graph $G$, as required.

 \qed

  \end{proof}

  {\bf Proof of Lemma \ref{consecutive-p}}
\begin{proof}
Take two characters $a$ and $b$  that are consecutive in the graph \grbd

Then two cases are possible.
Case 1:  assume that  $a< b$ or $b < a$. Then the realization of $a$ and $b$ produces  a unique red graph that is connected, otherwise there exists a species to which $a$ is connected but $b$ is not  and vice-versa, which contradicts the containment relation between the two characters. This fact proves the lemma.

Case 2: if $a$ and $b$ are not comparable, then   there exists a species $s$  such that $a$ connected  to $s$ but $b$ is not connected to $s$, and vice-versa there exists a species $s'$ such that $b$ is connected to $s'$, but $a$ is not.
Then there does not exist a specie $x$ of  graph \grb  that is not connected to both   $a$ and $b$.  In fact,  if  on the contrary, such a species exists, then the pair $(a,b)$ induces the four gametes  and thus $(a,b)$ is an edge of  the conflict graph, contradicting the fact that it is empty. Consequently, if species $x$ does not exist,  then the realization of $a$ and $b$ induces two red disjoint components thus proving the lemma.
\qed

\end{proof}

\begin{theorem}
\label{pol_empty-1}
Let $(M, E^*)$ be a binary matrix that has an empty conflict graph. Then there exists a polynomial time algorithm to build the p-pp tree for $M$ and if $E^*$ is empty then $M$ admits a solution.
\end{theorem}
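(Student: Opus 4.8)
The plan is to prove correctness and polynomiality of the procedure \textbf{Solve-CP-PP-empty-conflict}$(M,E^*)$ and to obtain the unconstrained claim as a by-product. By Theorem~\ref{main-equivalence}, the pair $(M,E^*)$ admits a solution if and only if the red-black graph $G_{RB}$ for $M_e$ has a successful c-reduction consistent with the constraints, and by the remark following that theorem it suffices to produce such a reduction on each connected component and concatenate the results. So I first reduce to the case in which $G_{RB}$ is connected. Soundness is then immediate: whenever the procedure realizes a sequence of characters that empties $G_{RB}$, Theorem~\ref{equivalence-0} turns this successful c-reduction into a standard p-pp tree, which by construction respects every constraint in $E^*$, since the realizability test never makes a character persistent in a constrained species.

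The heart of the argument is that realizing the maximal elements of the partial order $P$ is a safe and complete greedy choice. By Lemma~\ref{clique} (which, as noted, still holds for constrained matrices) the set $C_M$ of maximal characters is pairwise adjacent in the adjacency graph, and by Lemma~\ref{consecutive-p} the realization of two adjacent characters leaves each of them in a red component consisting of a single vertex. I would combine these two facts into the following invariant: after realizing all \emph{realizable} maximal characters of a component, each of them has become a free, isolated vertex, and what remains is a red-black graph supported on the strictly smaller set of non-maximal characters, again with an empty conflict graph. This sets up an induction on the number of characters. In the unconstrained case no species ever carries the forbidden $(0,0)$ configuration, so every maximal character is always realizable; the induction runs to completion and leaves an edgeless graph, hence a successful c-reduction. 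By Theorem~\ref{main-equivalence} this yields a p-pp tree, so $M$ admits a solution whenever $E^*=\emptyset$, which is precisely Theorem~\ref{pol_empty} and proves the second claim.

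For completeness in the constrained case I would argue by exchange: if a successful c-reduction exists, then some maximal character is realizable and, for every realizable maximal character $c$, there is a successful c-reduction realizing $c$ first. The crucial tool is the monotonicity of the constraints along $P$, namely that $c<c'$ together with $c'$ not being persistent in a species $s$ forces $c$ not to be persistent in $s$ either. Because every realization operates inside a single connected component, realizations never merge components, and by Lemma~\ref{consecutive-p} realizing adjacent characters splits them into red singletons; coupling this with the monotonicity shows that postponing a realizable maximal character in order to realize a smaller one first can neither introduce nor remove the $(0,0)$ obstruction that blocks a maximal character. Hence the greedy step discards no solution, and when every maximal character is blocked while the graph is still nonempty no c-reduction can exist, which justifies the \emph{no solution} output.

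Finally, building $P$ and the adjacency graph costs $O(nm^2)$, the graph $G_{RB}$ has $O(n+m)$ vertices and $O(nm)$ edges, and each of the at most $m$ iterations recomputes connected components and tests the realizability of each current maximal character in time linear in the size of its component; therefore the whole procedure runs in polynomial time. I expect the main obstacle to be the completeness/exchange step: rigorously tracking, via Lemma~\ref{consecutive-p} and the constraint-monotonicity, that realizing a maximal realizable character first can neither destroy an already existing successful c-reduction nor unblock a character that a later realization of a smaller character would otherwise have unblocked.
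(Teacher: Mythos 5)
There is a genuine gap in the inductive invariant at the heart of your argument. You claim that after realizing the realizable maximal characters of a component ``each of them has become a free, isolated vertex, and what remains is a red-black graph supported on the strictly smaller set of non-maximal characters, again with an empty conflict graph.'' This misreads Lemma~\ref{consecutive-p}: realizing a maximal character $c_m$ in a connected component adds \emph{red} edges from $c_m$ to every species of the component that lacks $c_m$, so after the first round each maximal character is \emph{active}, not free, and sits in a red component together with those species (the lemma only guarantees that the red components of two adjacent maximal characters are disjoint and each contains a single \emph{character} vertex). Consequently the residual graph is not a fresh, smaller instance of the same problem: it carries active characters whose red edges must still be discharged, and the whole danger of the procedure is that a later realization could combine these red edges into a red $\Sigma$-graph, i.e.\ an impossible realization. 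The paper's proof spends essentially all of its effort on exactly this point: it shows by a case analysis on $S_1=S(c_n)\setminus S(c_m)$, $S_2=S(c_m)\setminus S(c_n)$ and $S(c)$ that every maximal character $c$ of the \emph{second} round is either universal in its current component (so its realization adds no red edges) or lies in a component whose unique first-round maximal character has already become free (so the old red edges are gone before $c$ is realized). Without this step your induction does not close, so the central claim that the unconstrained run never gets stuck is unproven in your write-up.

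A secondary issue: your completeness argument for the constrained case is only a plan (an exchange argument based on the monotonicity of constraints along $P$), and you yourself flag it as the main obstacle without carrying it out. To be fair, the paper also treats this direction only informally (the displayed proof of Theorem~\ref{pol_empty-1} concentrates on the unconstrained existence claim, and the constrained correctness is argued in the main text via the observation that $c<c'$ and $c'$ non-persistent in $s$ forces $c$ non-persistent in $s$), so this is less a divergence than an equally incomplete treatment. The soundness direction via Theorem~\ref{equivalence-0} and the polynomial-time accounting are fine and match the paper.
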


%{\bf Proof of  Theorem \ref{pol_empty}}
\begin{proof}
Clearly, the algorithm for an empty conflict graph described in the paper works in polynomial time in the size of the input matrix $M$. In the following we show that the algorithm outputs a sequence $r$ that is a successful c-reduction of  graph \grbd. %(clearly  if $E^*$ is empty,  such a reduction always exists).
In the following we assume that  a maximal character $c$ can be realized before a character $c'$ such that  $c' < c$ in a successful c-reduction.
In particular, in the case that of the P-PP problem a solution always exists. In fact  at each iteration all characters may be realized and in the following we show that no red sigma-graph is induced at any iteration of the algorithm, till the red-black graph is edgeless.
By Lemma \ref{clique}, the set $C_M$  of $P$ forms a clique. Then by Lemma \ref{consecutive-p}, these characters can be realized in any arbitrary order without inducing  a $\Sigma$-graph  in the red-black graph.  More precisely, by Lemma \ref{consecutive-p}, after the realization of set $C_M$,  each character in   $C_M$ is connected to  red-edges and the red components induced by  a single character in $C_M$ are disjoint .
Observe that at a second iteration of step 1,  a maximal element  $c$ is    contained in a  maximal element  $c_m$ of the first iteration of step 1.

In the following we show that after the realization of elements in $C_M$,
 the red-black graph consists of disjoint components and each maximal element  $c$ that is in the poset $P$ after the removal of $C_M$ is in one of  such disjoint connected components and   it is a universal characters or the component has a unique maximal character in $C_M$ that is free. In the first case, no red-edge is induced by the realization of  $c$, while in the second case, the red-edges of the component induced  by an element in $C_M$ are removed. Consequently, it follows that by the application of   Lemma \ref{consecutive-p} and Lemma \ref{clique}, no red sigma-graph is induced in the new iteration of the algorithm.
%Moreover, we can show that each character in $C_m$ is free in one of the two components.

Let $c$ be an element such that $c < c_m$.
%By definition of  relation $<$, since $c < c_m$, it holds that
%character $c$ is contained in the connected component of another  element $c_n$ in $C_M$ or $c$ is in  a component that is disjoint from all the others.
 Assume that  $S_1 = S(c_n) / S(c_m)$ and $S_2= S(c_m) / S(c_n)$. Clearly $S_1$ and $S_2$  are disjoint sets and since $S(c) \subseteq  S(c_m)$, it holds that $S(c)/ S(c_n) \subset S_2$. Now, we have three cases: (1)  there exists a species $s \in S_2$ such that $s$ is not in $S(c)$ and a species $s_1 \in S_2$ such that $s_1$ is   in $S(c)$, or (2) all species in $S_2$ are not in $S(c)$  or (3)  all species in $S_2$  are  in $S(c)$ .  Assume that case 1) holds. Then,  if there exists a species $s'$ that is of $c_n$ and $c_m$ and $s'$ has character $c$,  it holds that $c$ and $c_n$ form an edge  of the conflict graph, thus this case is not possible. Consequently, if case 1) holds it means that  $c$ is in the component consisting only of species in $S_2$ and $c_n$ is free in such a component.
  Assume that case (2) holds. Then $c$ is connected to a species that is of $c_n$ and $c_m$. It follows that $c$ is universal for the component consisting of the species of $c_n$ and $c_m$ to which $c$ is connected.  Assume now that case 3) holds. Again this case shows that $c$ is universal for the connected component to which $c$ belongs.

  % By combining the three cases, it holds that either  maximal elements are free in a connected component or otherwise it is in a component where a maximal element in poset $P / C_M$ is universal for the %component.  In the first case it follows that we can iterate  the  application of Lemmas \ref{clique} and \ref{consecutive-p} to the   single connected  component  of the red-black graph where the maximal %element is free, since all red edges are removed from the component. Otherwise, the realization of the maximal element which is universal for the component allows the application again of
%Lemmas \ref{clique} and \ref{consecutive-p} for the new component. In other words we can realize the maximal elements of $P /C_M$.

% then $c_n$ is incident to  red edges connecting species to which $c_m$ was connected before the realization of $C_M$. Thus $c$ is connected to some of these species. It follows that the red-black graph consists of disjoint components after the realization of elements in $C_M$. Each of this component has an element $c_m$ that is free, that is all incident edges to such character are deleted from the red-black graph.
%Thus we re  the  application of Lemmas \ref{clique} and \ref{consecutive-p} to the maximal elements in the poset $P$ that has been updated and thus to single connected  components  of the red-black %graph.
\qed
\end{proof}

We shows that    the  CP-PP  problem is equivalent to a case of the General  Character Compatibility~\cite{DBLP:conf/isbra/ManuchPG11}  (GCC) whose computational complexity is still open and similarly we show the connection of the  restricted P-PP problem with the GCC one.

 In the General  Character Compatibility problem, the input is a set $S$ of species. Observe that a set of species $S$ is defined by means of a set of {\em generalized characters}.
A generalized character $\hat{\alpha}$ consists of a pair $(\alpha, T_{\alpha})$
where $\alpha: S \rightarrow 2^{Q_{\alpha}}$ ($Q_{\alpha}$ is the set of states
for $\hat{\alpha}$), and $T_{\alpha}= (V (T_{\alpha}),E)$ is a rooted character tree with nodes bijectively labeled by the elements of $Q_{\alpha}$.

%For example, we may have two generalized characters $\hat{\alpha} = ( \alpha(s_1) = 2, \alpha(s_2) = 0, T_{\alpha} =  0 \rightarrow 1 \rightarrow 2)$.

Then a the solution of  the GCC problem is a rooted tree $T = (V, E)$   and a {\em state-assignment} function
$c$ that assigns to each vertex $v \in V(T)$ and generalized character $\hat{\alpha}$  a state, in such a way that $c(v, \hat{\alpha}) \in Q_{\alpha}$ such that the following fact holds.

\begin{enumerate}

\item For each species $s \in  S $ there is a vertex  $v_s$ of tree $T$  such
that $c(v_s, \hat{\alpha}) \in \alpha (s)$ for each $\hat{\alpha}$.

\item For every $\hat{\alpha}  \in C$  and state $i \in Q_{\alpha}$, the set of vertices of $T$ where $\hat{\alpha}$  assumes value $i$  is a connected
component of $T$.

\item  For every $\hat{\alpha}  \in C$, the tree $T({\alpha})$ is an induced subtree of $T_{\alpha}$, where  $T({\alpha})$ is  obtained from $T$  by labeling the nodes of  $T$ only with their  $\alpha$-states
(as chosen by function $c$), and then contracting edges having the same $\alpha$-state at their
endpoints.
\end{enumerate}

In the following we show that the P-PP problem is equivalent to the GCC problem where $ {\alpha}(s) \in \{\{1\}, \{0,2\}\}$ for each species $s$ and moreover $T_{\alpha} =  0 \rightarrow 1 \rightarrow 2$.

%Let  $M$ be a generic instance of the  P-PP problem. Then we build an instance of the GCC problem as follows:
%we assign   a generalized character $\hat{\alpha}_j$ to each character $j$ of
%$M$ so that  if $M[i,j] = 0$, then $ \hat{\alpha}_j(s_i)  \in \{0,2\} $,
%otherwise $\hat{\alpha}_j(s_i)  \in \{1\} $.
%Moreover,  $T_{\alpha_j} =  0 \rightarrow 1 \rightarrow 2$ or $T_{\alpha} =  0 \rightarrow 1 $, for each character $\hat{\alpha}_j$.

We can show that state $2$ in the GCC problem is equivalent to the fact that  a character is  persistent that is, it is in state $0$ (lost)  but it has  been in state  $1$ before (gain).
Given an instance of the  GCC problem where for each generalized character $\hat{\alpha}$ it holds that $ {\alpha}(s) \in \{\{1\}, \{0,2\}\}$ for each species $s$ and $T_{\alpha} =  0 \rightarrow 1 \rightarrow 2$, we can associate a matrix $M$ that is an instance of the P-PP problem as follows: we assign a  column $c_{j}$ in $M$ to each generalized character  $\hat{{\alpha}_j}$ and a row for each species $s \in S$, where
$M[s, c_{j}] = 0$ if and only if  $ \hat{{\alpha}_j} (s )  \in \{0,2\} $, otherwise  $M[s, c_{j}] = 1$.
  Given a solution  $T_{pp}$ of the PP-problem,  it is easy to verify that the underlying unlabeled tree $T$ together with the function  $c$ such that for  each vertex $v$  and vector $l_v$, it holds that $l_v[i]  = c(v,\hat{\alpha}_i)$, if $c_i$ is in state $0$ or $1$ for the first time, otherwise if   $l_v[i] =0$ for the second time, then $c(v,\hat{\alpha}_i) =2$ is a solution of the associated GCC problem instance.
Consequently,  GCC reduces to the P-PP-problem.

The vice versa, that is the P-PP-problem reduces to GCC, is easily proved as follows. For each species $s$ of an input matrix $M$ of the PP-problem and character $c_j$, we define  the generalized character $\hat{{\alpha}_j}$, where $\hat{{\alpha}_j}(s )  \in \{0,2\} $ if and only if $M[s,c_j]=0$ otherwise $\hat{{\alpha}_j}(s )  \in \{1\} $ if and only if  $M[s, c_j] = 1$. Moreover $T_{{\alpha}_j} =   0 \rightarrow 1 \rightarrow 2$ for each $\hat{{\alpha}_j}$.

  In fact we can show that given a  tree  $T$  which is solution of the GCC problem,  then
  we can  build a    tree $T_{pp}$ obtained  by   assigning to each vertex $v$ of the tree $T$  a vector $l_v$ such that $l_v[i] =c(v, \hat{\alpha}_i)=j $,  when  $j \in \{0,1\}$ otherwise if $j = 2$, then $l_v[i] =0$.

Now,  $T_{pp}$   is a solution of the initial  P-PP problem instance.  In fact the four properties of Definition~\ref{def:persistent-perfect-phylogeny}  holds for $T_{pp}$.
In particular condition 4 of such definition is proved as follows.
Notice  that,  by  property 3 of tree $T$, the  tree $T({\alpha})$ represents the change in states of each character $\hat{\alpha}$. Since  $T_{\alpha} =  0 \rightarrow 1 \rightarrow 2$,  the state  of a character changes  from $0$ to $1$ and again to $0$ (which we represent with state $2$) only if persistent.
 A direct consequence of property 2 of tree $T$,  is that in tree $T$ we can label only one edge $(u,v)$ of the tree  with  a character $c_{j}^+$ or $c_{j}^-$, more precisely, when     then the state of the generalized character $\hat{{\alpha}_j}$ changes from $0$ to $1$ in $l_u$ and $l_v$, and respectively, from $1$ to $2$ in $l_u$ and $_v$  in the tree $T$. In fact, the set of vertices with a common state $c(v, \hat{\alpha}_j)=k$   for a character $\hat{\alpha}_j$  must be a connected component. Now, when $k\in \{1\}$  (or $k \in \{2\}$),  the connected component induced by    vertices  $v$ labeled $k$ by the $c$ function represents the substree of tree $T_{pp}$ rooted in the end of the edge labeled $c_{j}^+$ (or $c_{j}^-$, respectively).

Using the above result,  assuming that some characters may have only state $0$ in a given species,  it is easy to show that the CP-PP problem is equivalent to the GCC in the case that $ {\alpha}(s) \in \{\{1\}, \{0\}, \{0,2\}\}$ for each species $s$ and $T_{\alpha} =  0 \rightarrow 1 \rightarrow 2$.

\end{document}